\title{Monotone Submodular Matroid}
\newcommand{\newnl}{\let\nl\oldnl}
\newcommand{\nonl}{\let\oldnl\nl \let\nl\newnl}
\definecolor{darkgreen}{rgb}{0,0.5,0}
\newtheorem{theorem}{Theorem}[section]
\newtheorem{lemma}[theorem]{Lemma}
\newtheorem{corollary}[theorem]{Corollary}
\newtheorem{definition}{Definition}[section]
\newtheorem{proposition}[theorem]{Proposition}
\newtheorem{observation}[theorem]{Observation}
\newtheorem{remark}[theorem]{Remark}
\newcommand{\defcal}[1]{\expandafter\newcommand\csname c#1\endcsname{{\mathcal{#1}}}}
\newcommand{\defbb}[1]{\expandafter\newcommand\csname b#1\endcsname{{\mathbb{#1}}}}
\newcounter{calBbCounter}
    \edef\letter{\Alph{calBbCounter}}
\newcommand{\eps}{\varepsilon}
\newcommand{\nnR}{{\bR_{\geq 0}}}
\newcommand{\cupdot}{\mathbin{\mathaccent\cdot\cup}}
\newcommand{\email}[1]{{\href{mailto:#1}{#1}}}
\newcommand{\ignore}[1]{}
\DeclareMathOperator{\Poly}{Poly}
\title{Deterministic Algorithm and Faster Algorithm for Submodular Maximization subject to a Matroid Constraint}
\author{Niv Buchbinder\thanks{Department of Statistics and Operations Research, Tel Aviv University. E-mail: \email{niv.buchbinder@gmail.com}} \and
				Moran Feldman\thanks{Department of Computer Science, University of Haifa. E-mail: \email{moranfe@cs.haifa.ac.il}}}
\begin{document}

\maketitle
\thispagestyle{empty}
\pagenumbering{Alph}
\begin{abstract}
We study the problem of maximizing a monotone submodular function subject to a matroid constraint, and
present for it a {\bf deterministic} non-oblivious local search algorithm that has an approximation guarantee of $1 - \nicefrac{1}{e} - \eps$ (for any $\eps > 0$) and query complexity of $\tilde{O}_\eps(nr)$, where $n$ is the size of the ground set and $r$ is the rank of the matroid. Our algorithm vastly improves over the previous state-of-the-art $0.5008$-approximation deterministic algorithm, and in fact, shows that there is no separation between the approximation guarantees that can be obtained by deterministic and randomized algorithms for the problem considered. The query complexity of our algorithm can be improved to $\tilde{O}_\eps(n + r\sqrt{n})$ using randomization, which is nearly-linear for $r = O(\sqrt{n})$, and is always at least as good as the previous state-of-the-art algorithms. 

\medskip

\noindent \textbf{Keywords:} submodular maximization, matroid constraint, deterministic algorithm, fast algorithm

\end{abstract}
\newpage
\pagenumbering{arabic}

\section{Introduction}

The problem of maximizing a non-negative monotone submodular function subject to a matroid constraint is one of earliest and most studied problems in submodular maximization. 
The natural greedy algorithm obtains the optimal approximation ratio of $1 - \nicefrac{1}{e} \approx 0.632$ for this problem when the matroid constraint is restricted to be a cardinality constraint~\cite{nemhauser1978best,nemhauser1978analysis}. However, the situation turned out to be much more involved for general matroid constraints (or even partition matroid constraints). In $1978$, Fisher et al. showed that the natural greedy algorithm guarantees a sub-optimal approximation ratio of $\nicefrac{1}{2}$ for such constraints. This was improved (roughly $40$ years later) by C{\u{a}}linescu et al.~\cite{calinescu2011maximzing}, who described a Continuous Greedy algorithm obtaining the optimal approximation ratio of $1 - \nicefrac{1}{e}$ for general matroid constraints. The Continuous Greedy algorithm was a celebrated major improvement over the state-of-the-art, but it has a significant drawback; namely, it is based on a continuous extension of set functions termed the \emph{multilinear extension}. The only known way to evaluate the multilinear extension is by sampling sets from an appropriate distribution, which makes the continuous greedy algorithm (and related algorithms suggested since the work of C{\u{a}}linescu et al.~\cite{calinescu2011maximzing}) both randomized and quite slow.

Badanidiyuru \& Vondr{\'{a}}k~\cite{babanidiyuru2014fast} were the first to study ways to improve over the time complexity of C{\u{a}}linescu et al.~\cite{calinescu2011maximzing}. To understand their work, and works that followed it, we first need to mention that it is standard in the literature to assume that access to the submodular objective function and the matroid constraint is done via {\em value} and {\em independence oracles} (see Section~\ref{sec:preliminaries} for details). The number of queries to these oracles used by the algorithm is often used as a proxy for its time complexity. Determining the exact time complexity is avoided since it requires taking care of technical details, such as the data structures used to maintain sets, and doing so makes more sense in the context of particular classes of matroid constraints (see, for example,~\cite{ene2019towards,henzinger2023faster}). 

In their above-mentioned work, Badanidiyuru \& Vondr{\'{a}}k~\cite{babanidiyuru2014fast} gave a variant of the Continuous Greedy algorithm that obtains $1 - \nicefrac{1}{e} - \eps$ approximation using $\tilde{O}_\eps(nr)$ queries,\footnote{The {\raise.17ex\hbox{$\scriptstyle\sim$}} sign in $\tilde{O}_\eps$ suppresses poly-logarithmic factors, and the $\eps$ subscript suppresses factors depending only on $\eps$.} where $n$ is the size of the ground set, and $r$ is the rank of the matroid constraint (the maximum size of a feasible solution). Later, Buchbinder et al.~\cite{buchbinder2017comparing} showed that, by combining the algorithm of~\cite{babanidiyuru2014fast} with the Residual Random Greedy algorithm of~\cite{buchbinder2014submodular}, it is possible to get the same approximation ratio using $\tilde{O}_\eps(n\sqrt{r} + r^2)$ queries. Very recently, an improved rounding technique has allowed Kobayashi \& Terao~\cite{kobayash2024sunquadratic} to drop the $r^2$ term from the last query complexity, which is an improvement in the regime of $r = \omega(n^{2/3})$. While the query complexity of~\cite{kobayash2024sunquadratic} is the state-of-the-art for general matroid constraints, some works have been able to further improve the query and time complexities to be nearly-linear (i.e., $\tilde{O}_\eps(n)$) in the special cases of partition matroids, graphic matroids~\cite{ene2019towards}, laminar matroids and transversal matroids~\cite{henzinger2023faster}. It was also shown by~\cite{kobayash2024sunquadratic} that a query complexity of $\tilde{O}_\eps(n + r^{3/2})$ suffices when we are given access to the matroid constraint via an oracle known as \emph{rank oracle}, which is a stronger oracle compared to the independence oracle assumed in this paper (and the vast majority of the literature).

All the above results are randomized as they inherit the use of the multilinear extension from the basic Continuous Greedy algorithm. Filmus and Ward~\cite{filmus2014monotone} were the first to try to tackle the inherent drawbacks of the multilinear extension by suggesting a non-oblivious local search algorithm.\footnote{A local search algorithm is \emph{non-oblivious} if the objective function it tries to optimize is an auxiliary function rather than the true objective function of the problem.} Their algorithm is more combinatorial in nature as it does not require a rounding step. However, it is still randomized because the evaluation of its auxiliary objective function is done by sampling from an appropriate distribution (like in the case for the multilinear extension). To the best of our knowledge, the only known deterministic algorithm for the problem we consider whose approximation ratio improves over the $\nicefrac{1}{2}$-approximation guarantee of the natural greedy algorithm is an algorithm due to Buchbinder et al.~\cite{buchbinder2023deterministic} that guarantees $0.5008$-approximation.

\subsection{Our Contribution} \label{ssc:contribution}

We suggest an (arguably) simple deterministic algorithm for maximizing  a non-negative monotone submodular function $f\colon 2^\cN \rightarrow \nnR$ subject to a matroid constraint $\cM = (\cN, \cI)$.\footnote{See Section~\ref{sec:preliminaries} for the formal definitions of non-negative monotone submodular functions and matroid constraints.} Our algorithm is a non-oblivious local search algorithm parametrized by a positive integer value $\ell$ and additional positive values $\alpha_1, \ldots, \alpha_{\ell}$. The algorithm maintains $\ell$ disjoint sets $S_1, S_2, \ldots, S_{\ell}$ whose union is a base of $\cM$ (i.e., an inclusion-wise maximal feasible set), and aims to maximize an auxiliary function $g(S_1, \ldots, S_{\ell})$. To state this function, it is useful to define, for any $J\subseteq [\ell]\triangleq \{1, 2, \dotsc, \ell\}$, the shorthand $S_{J} \triangleq \cupdot_{i\in J}S_j$ (the dot inside the union is only aimed to emphasize that the sets on which the union is done are disjoint). For convenience, we also define $\alpha_0 \triangleq 0$. Then, the auxiliary function $g$ is given by 
\[g(S_1, \ldots, S_{\ell}) =  \sum_{J \subseteq [\ell]} \alpha_{|J|} \cdot f(S_J) \enspace.\]

\begin{algorithm}[h]
\caption{\textsc{Non-Oblivious Local Search}} \label{alg:basic}
    Start from any $\ell$ disjoint sets $S_1, \ldots, S_{\ell}$ such that $S_{[\ell]}$ is a base of the matroid $\cM$.\\
    While some step from the following list strictly increases the value of 
$g(S_1, \ldots, S_{\ell})$, perform such a step.\\\nonl
\vphantom{g}\begin{minipage}[t]{15.5cm}
\begin{compactitem}
    \item Step 1: Move some element $u\in S_k$ to a different set $S_j$. \label{alg:step1}
    \item Step 2: Remove some element $u\in S_k$, and add to any $S_j$ some element $v\in \cN\setminus S_{[\ell]}$ such that $(S_{[\ell]}\setminus \{u\}) \cup \{v\} \in \cI$. \label{alg:step2}
\end{compactitem}
\end{minipage}\\
    \Return $S_{[\ell]}$.
\end{algorithm}

The formal definition of our algorithm appears as Algorithm~\ref{alg:basic}. Its output is the set $S_{[\ell]}$, which is clearly a base of the matroid $\cM$. The approximation guarantee of Algorithm~\ref{alg:basic} is given by the next proposition. When $\ell=1$, Algorithm~\ref{alg:basic} reduces to the standard local search algorithm (applied to $f$), and Proposition~\ref{prop:approximation_basic} recovers in this case the known $\nicefrac{1}{2}$-approximation guarantee of this local search algorithm due to~\cite{fisher1978analysis}. On the other extreme, when $\ell = r$, Algorithm~\ref{alg:basic} reduces to a (non-sampling exponential time) version of the algorithm of Filmus and Ward~\cite{filmus2014monotone}.
\newtoggle{introduction}\toggletrue{introduction}
\begin{restatable}{proposition}{propApproximationBasic} \label{prop:approximation_basic}
\iftoggle{introduction}{Let $OPT\in \cI$ be a set  maximizing $f$. }{}By setting $\alpha_i = \frac{(1 + 1/\ell)^{i - 1}}{\binom{\ell - 1}{i - 1}}$ for every $i \in [\ell]$, it is guaranteed that
    \[f(S_{[\ell]}) \geq \left(1 - (1 + \nicefrac{1}{\ell})^{-\ell}\right)\cdot f(OPT) + (1 + \nicefrac{1}{\ell})^{-\ell} \cdot f(\varnothing)
    \enspace. \]
Thus, for every $\eps > 0$, one can get $f(S_{[\ell]}) \geq\left(1 - \nicefrac{1}{e} - O(\eps)\right) \cdot f(OPT)$ by setting $\ell = 1 + \lceil 1/\eps \rceil$.
\end{restatable}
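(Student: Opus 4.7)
My plan is a non-oblivious local search analysis. Let $S := S_{[\ell]}$ be the base returned by the algorithm; by monotonicity of $f$ we may assume $O := OPT$ is also a base of $\cM$. I first invoke the strong basis exchange property (Brualdi's theorem) to obtain a bijection $\pi\colon S\setminus O \to O\setminus S$ with $(S\setminus\{u\})\cup\{\pi(u)\}\in\cI$ for every $u \in S\setminus O$, and extend $\pi$ by the identity on $S\cap O$. For every $u \in S$ (let $k=k(u)$ be the index with $u\in S_k$) and every $j\in[\ell]$, local optimality yields one inequality: either Step~1 (when $\pi(u)=u$, moving $u$ from $S_k$ to $S_j$) or Step~2 (when $\pi(u)\neq u$, removing $u$ and adding $\pi(u)$ to $S_j$) fails to increase $g$.

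The second step is to expand each such inequality using $g=\sum_{J\subseteq[\ell]}\alpha_{|J|}f(S_J)$. For Step~2, applying submodularity to the ``diagonal'' $J$-term where $\{k,j\}\subseteq J$ (via $f((S_J\setminus\{u\})\cup\{v\})-f(S_J) \geq [f(S_J\setminus\{u\})-f(S_J)] + [f(S_J\cup\{v\})-f(S_J)]$) reduces the inequality to the clean decoupled form
\[
\sum_{J\ni j}\alpha_{|J|}\bigl[f(S_J\cup\{\pi(u)\})-f(S_J)\bigr] \;\le\; \sum_{J\ni k}\alpha_{|J|}\bigl[f(S_J)-f(S_J\setminus\{u\})\bigr],
\]
separating the gain of inserting $\pi(u)$ into block $j$ from the loss of removing $u$ from block $k$; a parallel inequality (with restricted indicator sets) holds in the Step~1 case.

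The third step is to aggregate all these inequalities with uniform weight over $(u,j)\in S\times[\ell]$. After swapping the order of summation over $J$, the left-hand side contains, for each $J$, the sum $\sum_{v\in O\setminus S_J}[f(S_J\cup\{v\})-f(S_J)]$, which submodularity and monotonicity bound below by $f(O)-f(S_J)$; the right-hand side contains $\sum_{u\in S_J}[f(S_J)-f(S_J\setminus\{u\})]$, which standard diminishing-returns telescoping bounds above by $f(S_J)-f(\varnothing)$. Grouping contributions by $|J|=i$ and using $|\{J\ni k : |J|=i\}|=\binom{\ell-1}{i-1}$, the aggregate inequality takes a form whose per-size coefficient in front of the $f(O)-f(S_J)$ and $f(S_J)-f(\varnothing)$ terms is $\binom{\ell-1}{i-1}\alpha_i$.

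The final step is the algebraic payoff of setting $\alpha_i=(1+1/\ell)^{i-1}/\binom{\ell-1}{i-1}$, which makes $\binom{\ell-1}{i-1}\alpha_i=(1+1/\ell)^{i-1}$ and thereby turns the per-size coefficients into a geometric progression of ratio $1+1/\ell$. Summing $\sum_{i=1}^\ell(1+1/\ell)^{i-1}=\ell[(1+1/\ell)^\ell-1]$ and rearranging produces exactly the claimed bound $f(S)\geq(1-(1+1/\ell)^{-\ell})f(O)+(1+1/\ell)^{-\ell}f(\varnothing)$. The principal obstacle I anticipate is executing the telescoping correctly: intermediate $f(S_J)$ values of different sizes must cancel in a precise pattern, which is exactly what the geometric choice of $\alpha_i$ was engineered for. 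The most delicate point will be handling the Step~1 inequalities for elements of $S\cap O$ consistently with the Step~2 inequalities, so that $\pi(u)$ contributes cleanly to the aggregated left-hand side whether or not $\pi(u)=u$, thereby letting the full sum over $v\in O$ (not merely $O\setminus S$) reach the target lower bound $f(O)-f(S_J)$.
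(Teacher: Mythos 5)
Your overall plan tracks the paper's argument closely: basis exchange to get the bijection $\pi$ extended by the identity on $S\cap OPT$, decoupling each local-optimality inequality via submodularity applied to the diagonal $\{k,j\}\subseteq J$ term (this is exactly the paper's Observation~\ref{obs:local_minimum} and the derivation of Inequality~\eqref{ineq111-like}), and a final rearrangement exploiting the geometric choice of $\alpha_i$. You also correctly flag the $S\cap OPT$ case as needing Step~1 plus a monotonicity argument, which is what the paper does.

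However, the telescoping you describe in step~3 is a genuine gap and the argument does not close as written. Bounding $\sum_{u\in S_J}[f(S_J)-f(S_J\setminus\{u\})]$ by $f(S_J)-f(\varnothing)$ is a valid submodularity inequality, but it is too weak: after aggregating, the coefficient of $f(S_J)$ on the right-hand side becomes $\ell\,\alpha_{|J|}$, while the left-hand side contributes $|J|\,\alpha_{|J|}$; adding these gives $(\ell+|J|)\,\alpha_{|J|}>0$ for every nonempty $J$, and no choice of $\alpha_i$ can make the intermediate $f(S_J)$ terms vanish. (Relatedly, your claimed per-size coefficient $\binom{\ell-1}{i-1}\alpha_i$ on \emph{both} sides does not check out: the $f(O)-f(S_J)$ side yields $\sum_{|J|=i}|J|\alpha_i=i\binom{\ell}{i}\alpha_i=\ell\binom{\ell-1}{i-1}\alpha_i$, whereas your $f(S_J)-f(\varnothing)$ side yields $\ell\binom{\ell}{i}\alpha_i$, which is larger by a factor $\ell/i$.) What the argument actually needs is the \emph{block-level} telescope: for each block $k\in J$,
\[
\sum_{u\in S_k}\bigl[f(S_J)-f(S_J\setminus\{u\})\bigr]\le f(S_J)-f(S_{J\setminus\{k\}}),
\]
so that the upper bound on the right-hand side is $\sum_{k\in J}\bigl[f(S_J)-f(S_{J\setminus\{k\}})\bigr]$ rather than $f(S_J)-f(\varnothing)$. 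Keeping the partial-union terms $S_{J\setminus\{k\}}$ alive is essential: reindexing $\sum_{k}\sum_{J\ni k}\alpha_{|J|}[f(S_J)-f(S_{J\setminus\{k\}})]$ produces the per-$J$ coefficient $\alpha_{|J|}\,|J|-\alpha_{|J|+1}\,(\ell-|J|)$, and combining this with the $\tfrac{|J|}{\ell}\alpha_{|J|}$ from the $f(O)-f(S_J)$ side gives the coefficient $\alpha_{|J|}\,|J|(1+\nicefrac{1}{\ell})-\alpha_{|J|+1}(\ell-|J|)$ of the paper's Lemma~\ref{lem:main}. The recurrence $\alpha_i\cdot i(1+\nicefrac{1}{\ell})=\alpha_{i+1}(\ell-i)$ satisfied by $\alpha_i=(1+\nicefrac{1}{\ell})^{i-1}/\binom{\ell-1}{i-1}$ then annihilates the intermediate terms, leaving only $J=\varnothing$ and $J=[\ell]$, which is precisely what your step~4 needs. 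So replace the element-level telescope to $f(\varnothing)$ with the block-level telescope to $f(S_{J\setminus\{k\}})$; the rest of your plan then goes through.
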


Unfortunately, the query complexity of Algorithm~\ref{alg:basic} is not guaranteed to be polynomial (for any value of $\ell$), which is often the case with local search algorithms that make changes even when these changes lead only to a small marginal improvement in the objective. To remedy this, we show how to implement a fast variant of Algorithm~\ref{alg:basic}, which proves our main theorem. 

\begin{theorem} \label{thm:main}
There exists a deterministic non-oblivious local search algorithm for maximizing a non-negative monotone submodular function subject to a matroid constraint that has an approximation guarantee of $1 - \nicefrac{1}{e} - O(\eps)$ (for any $\eps > 0$) and query complexity of $\tilde{O}_\eps(nr)$, where $r$ is the rank of the matroid and $n$ is the size of its ground set. The query complexity can be improved to $\tilde{O}_\eps(n + r\sqrt{n})$ using randomization. 
\end{theorem}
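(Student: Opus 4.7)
The plan is to derive Theorem~\ref{thm:main} by turning Algorithm~\ref{alg:basic} into a polynomial-query procedure without sacrificing more than $\eps$ in the approximation ratio. I would fix $\ell = 1 + \lceil 1/\eps \rceil$ so Proposition~\ref{prop:approximation_basic} delivers a $(1-\nicefrac{1}{e}-O(\eps))$ guarantee at any exact local optimum of $g$. The standard remedy for the unbounded iteration count is an \emph{approximate} local search: perform a Step~1 or Step~2 move only when it multiplicatively increases $g(S_1,\dots,S_\ell)$ by at least $1+\delta$ for a suitable $\delta = \Theta(\eps/(r \cdot 2^\ell))$, driven by a geometrically decreasing additive threshold $\tau$ that is halved between phases. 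This reduces the task to three separate accounting problems: loss in approximation from approximate optimality, total number of improving iterations, and query cost per iteration.

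For the approximation accounting, I would revisit the proof of Proposition~\ref{prop:approximation_basic} and inspect the system of local-optimality inequalities (one per candidate Step~1/Step~2 move) that are combined by charging against $OPT$. Each inequality now carries an additive $\delta \cdot g$ slack; since the proof aggregates at most $O(r \cdot 2^\ell)$ such inequalities, the total slack is at most $O(r \cdot 2^\ell \cdot \delta) \cdot f(OPT)$, which is absorbed into $\eps$ by the chosen $\delta$. For the iteration count, observe that $g$ is monotone in the execution and bounded above by $(\sum_i \alpha_i \binom{\ell}{i}) \cdot f(OPT)$; starting from a base obtained by the standard $\nicefrac{1}{2}$-approximation greedy ensures $g$ is initially $\Omega_\eps(f(OPT))$, so within each phase the number of $(1+\delta)$-improvements is $O(\delta^{-1} \log(1/\delta)) = \tilde{O}_\eps(r)$, and there are $\tilde{O}_\eps(1)$ phases.

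For the deterministic $\tilde{O}_\eps(nr)$ query bound, each iteration must find an improving move or certify that none exceeds the current threshold $\tau$. Step~1 has only $O(r \ell) = \tilde{O}_\eps(r)$ candidates, each costing $O(2^\ell)=O_\eps(1)$ value queries to score against $g$. For Step~2, rather than scan all $(u,v)$ pairs, I would exploit the matroid exchange structure: for each $v \in \cN\setminus S_{[\ell]}$, the legal removals form the fundamental circuit $C(v, S_{[\ell]})$, reducing the candidate set to size $O(n \cdot r)$ overall and restricting $O(n)$ independence queries per scan. Combined with lazy marginal bookkeeping (elements whose contribution to $g$ has not changed since their last scoring are skipped), a full phase costs $\tilde{O}_\eps(nr)$ queries, which is the target bound.

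For the randomized $\tilde{O}_\eps(n + r\sqrt{n})$ bound, I would follow the template of~\cite{buchbinder2017comparing,kobayash2024sunquadratic} and replace the exhaustive Step~2 scan by random sampling: draw $\tilde{\Theta}(\sqrt{n})$ candidate exchange elements and show that, as long as a $\tau$-improving swap exists, one is hit with high probability, while correctness of the approximation analysis is preserved because we only ever accept genuine $(1+\delta)$-improvements. The main obstacle I anticipate is precisely the interaction between the exponential-in-$\ell$ number of terms in $g$ and the $\delta$-slack in the charging argument: making $\delta$ small enough to absorb $r \cdot 2^\ell$ inequalities without inflating the iteration count (and hence the query bound) beyond $\tilde{O}_\eps(nr)$ requires a careful choice of constants and a sharper-than-naive analysis of how each phase's threshold $\tau$ relates to the current value of $g$.
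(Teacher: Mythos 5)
Your proposal takes a meaningfully different route from the paper, but it has two concrete gaps that prevent the query bounds from actually closing.

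First, the structural difference. The paper does \emph{not} run an approximate local search directly on $g$ over the $\ell$ disjoint sets. Instead it replaces the ground set by $\cN' = \cN \times [\ell]$, lifts $\cM$ to a matroid $\cM'$ on $\cN'$, and shows that the single-argument function $g'(S) = \sum_{J}\alpha_{|J|}f(\pi_J(S))$ is non-negative, monotone, and submodular. This converts the whole problem into ``find an approximate local optimum of a single monotone submodular $g'$ under a single matroid $\cM'$,'' for which the paper proves a self-contained subroutine (Proposition~\ref{prop:fast}). Crucially, the local-optimality guarantee obtained there is not pairwise ($g'(S) \geq g'(S-u+v) - \tau$ for all swaps) but an \emph{averaged} one: $\sum_{v\in T}g'(v \mid S-v) - \sum_{u\in S}g'(u\mid S-u) \leq \eps' g'(OPT')$ for all $T\in\cI'$, which is strictly weaker (see Remark~\ref{rem:localOPT}) and is exactly what makes the randomized bound achievable. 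Your proposal sticks with the $\ell$-set formulation and the pairwise-swap guarantee, which is a harder target.

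Second, the deterministic accounting. You say each Step~2 scan can be done in ``$O(n)$ independence queries,'' but the mechanism is not justified: the fundamental circuit $C(v, S_{[\ell]})$ of a single $v$ can have size $\Theta(r)$, and finding the cheapest removable $u$ for each $v$ naively costs $\Theta(r)$ queries per $v$, i.e.\ $\Theta(nr)$ per iteration and hence $\tilde{O}_\eps(nr^2)$ overall. The paper's fix is a binary search (Lemma~\ref{lem:independence}): sort the current base by the weight $f(u\mid S-u)$ once per iteration, and for each $v$ locate the minimum-weight removable $u$ with $O(\log r)$ independence queries. Lazy bookkeeping alone, as you describe it, does not obviously deliver the needed amortization, because a single swap can change the marginals of $\Theta(r)$ elements in the base.

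Third, and more seriously, the randomized accounting. Your plan is to sample $\tilde\Theta(\sqrt n)$ candidates and argue that an improving swap is ``hit with high probability'' whenever one exists. That fails: if only a constant number of $\tau$-improving swaps exist, a sample of size $\sqrt n$ hits one with probability $O(1/\sqrt n)$, not high probability, and boosting this would destroy the query budget. The paper's argument is structurally different: it samples $R_1\subseteq S$ of size $\min\{r,\sqrt n\}$ and $R_2\subseteq\cN$ of size $\max\{n/r,\sqrt n\}$ so that $|R_1|\cdot|R_2|=n$, which makes the per-iteration probability of catching some pair from the exchange bijection merely a constant ($\geq 1-1/e$). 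Correctness does not come from per-iteration success. It comes from an expectation argument over the $O(r/\eps)$ iterations: whenever the current set violates the averaged local-optimality inequality, the \emph{expected} gain in $f$ that iteration is $\Omega(\eps/r)\cdot f(OPT)$; since total gain is bounded by $f(OPT)$, at most $6r/\eps$ iterations can be ``bad,'' so a uniformly random iteration index is good with probability $\geq 2/3$. Your framework, which insists on finding an improving move each step and declares convergence only when none is found, does not support this argument and I do not see how to reach $\tilde O_\eps(n+r\sqrt n)$ with it. (A minor related issue: you seed the search with the standard greedy $1/2$-approximation, which already costs $\Theta(nr)$ queries and blows the randomized budget; the paper instead uses the $1/3$-approximation of Lemma~\ref{lem:fastapprox} at cost $O(n\log r)$.)

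Your treatment of the $\delta$-slack in the charging argument and the phase-based iteration bound look fine in spirit, and a direct multi-set local search with multiplicative thresholds could plausibly be pushed through for the deterministic bound if you add the binary-search primitive. But the randomized piece needs the averaged guarantee and the expectation-over-iterations argument, which are genuinely missing from your plan.
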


Our algorithm vastly improves over the previous state-of-the-art $0.5008$-approximation deterministic algorithm, and in fact, shows that there is no separation between the approximation guarantees that can be obtained by deterministic and randomized algorithms for maximizing monotone submodular functions subject to a general matroid constraint. In terms of the query complexity, our (randomized) algorithm is nearly-linear for $r = \tilde{O}(\sqrt{n})$, and its query complexity is always at least as good as the $\tilde{O}_\eps(n\sqrt{r})$ query complexity of the previous state-of-the-art algorithm for general matroid constraints (due to~\cite{kobayash2024sunquadratic}).

A central component of our algorithm is a procedure that given a submodular function $f\colon 2^\cN \rightarrow \bR$, a matroid $\cM = (\cN, \cI)$, a value $\eps\in (0,1)$, and a set $S_0 \in \cI$ outputs a set $S\in \cI$
satisfying 
\begin{equation}
\sum_{v\in T}f(v \mid S \setminus \{v\})- \sum_{u\in S}f(u \mid S \setminus \{u\}) \leq \eps \cdot [f(OPT) - f(S_0)] \qquad \forall T\in \cI \enspace,\label{ineq-localOPT1}
\end{equation}
where $f(v \mid S) \triangleq f(S \cup \{v\}) - f(S)$ represents the marginal contribution of element $v$ to the set $S$. 
We note that the above set $S$ is not a local maximum of $f$ in the usual sense (see Remark \ref{rem:localOPT}). 
However, the weaker guarantee given by Inequality~\eqref{ineq-localOPT1} suffices for our purposes, and can be obtained faster.
The following proposition states the query complexity required to get such a set $S$. 
We believe that this proposition may be of independent interest.

\begin{restatable}{proposition}{propFast} \iftoggle{introduction}{\label{prop:fast}}{}
    Let $f\colon 2^\cN \rightarrow \bR$ be a submodular function, $M=(\cN,\cI)$ be a matroid of rank $r$ over a ground set $\cN$ of size $n$, \iftoggle{introduction}{and }{}$S_0$ be a set in $\cI$\iftoggle{introduction}{}{, and $OPT$ be a set in $\cI$ maximizing $f$}. Then, for any $\eps\in (0,1)$, 
    \begin{itemize}
        \item there exists a {\bf deterministic} algorithm that makes $\tilde{O}(\eps^{-1}nr)$ value and independence oracle queries and outputs a subset $S\in \cI$ satisfying \iftoggle{introduction}{Inequality~\eqref{ineq-localOPT1}.}{\begin{equation} \sum_{v\in T}f(v \mid S - v)- \sum_{u\in S}f(u \mid S-u) \leq \eps \cdot [f(OPT) - f(S_0)] \qquad \forall T\in \cI \enspace. \forcelabel{ineq-localOPT}\end{equation}} 
    \item there exists a {\bf randomized} algorithm that makes $\tilde{O}(\eps^{-1}(n+r\sqrt{n}))$ value and independence oracle queries and outputs a subset $S\in \cI$ that with probability at least $1-\eps$ satisfies Inequality~\iftoggle{introduction}{\eqref{ineq-localOPT1}}{\eqref{ineq-localOPT}}. Furthermore, if $f$ is guaranteed to be also monotone and $S_0 = \varnothing$, then the algorithm can be modified so that it outputs ``Fail" whenever $S$ does not satisfy Inequality~\iftoggle{introduction}{\eqref{ineq-localOPT1}}{\eqref{ineq-localOPT}} (the algorithm might output ``Fail'' also when  Inequality~\iftoggle{introduction}{\eqref{ineq-localOPT1}}{\eqref{ineq-localOPT}} is satisfied, but it outputs ``Fail'' with probability at most $\eps$ overall).
		\end{itemize}
\end{restatable}
\togglefalse{introduction}


A previous version of this paper~\cite{buchbinder2024deterministic} used a slightly weaker version of Proposition~\ref{prop:fast} in which the function $f$ was restricted to be non-negative and monotone, the term $f(v \mid S \setminus \{v\})$ in Inequality~\eqref{ineq-localOPT1} was replaced with $f(v \mid S)$ and the term $f(S_0)$ did not appear at all in Inequality~\eqref{ineq-localOPT1}. This weaker version was sufficient for getting the main results of this paper, but had two intuitive weaknesses: first, it applied only to a subset of all submodular function, and second, a set $S \in \cI$ obeying the old version of Inequality~\eqref{ineq-localOPT1} with respect to a linear function $f$ was not necessarily close to maximizing $f$ among all the sets of $\cI$. These weaknesses are important when one would like to maximize sums of linear and monotone submodular functions. Maximization of such sums can be used to maximize monotone submodular functions with bounded curvature~\cite{sviridenko2017optimal} and to capture soft constraints and regularizers in machine learning applications~\cite{harshaw2019submodular,kazemi2021regularized,nikolakaki2021efficient}. See Appendix~\ref{app:submodular_linear_sums} for more detail.

Following this work, Buchbinder \& Feldman~\cite{buchbinder2024extending} developed a greedy-based \emph{deterministic} algorithm for maximizing general submodular functions subject to a matroid constraint. For the special case of monotone submodular functions, the algorithm of~\cite{buchbinder2024extending} recovers the optimal approximation guarantee of Theorem~\ref{thm:main}. Like our algorithm, the algorithm of~\cite{buchbinder2024extending} also has both deterministic and randomized versions. The deterministic version has a much larger query complexity compared to the query complexity stated in Theorem~\ref{thm:main} for the deterministic case. However, the randomized version of the algorithm of~\cite{buchbinder2024extending} has a query complexity of $\tilde{O}_\eps(n + r^{3/2})$, which is always at least as good as the query complexity our algorithm.

\subsection{Additional Related Work}

As mentioned above, the standard greedy algorithm obtains the optimal approximation ratio of $1 - \nicefrac{1}{e}$ for the problem of maximizing a monotone submodular function subject to a cardinality constraint. It was long known that this algorithm can be sped in practice using lazy evaluations, and Badanidiyuru and Vondr{\'{a}}k~\cite{babanidiyuru2014fast} were able to use this idea to reduce the time complexity of the greedy algorithm to be nearly linear. Later, Mirzasoleiman et al.~\cite{mirzasoleiman2015lazier} and Buchbinder et al.~\cite{buchbinder2017comparing} were able to reduce the time complexity to be cleanly linear at the cost of using randomization and deteriorating the approximation ratio by $\eps$. Obtaining the same time complexity and approximation ratio using a deterministic algorithm was an open question that was recently settled by Kuhnle~\cite{kuhnle2021}, Li et al.~\cite{li2022submodular} and Huang \& Kakimura~\cite{huang2022multipass}.

We have discussed above works that aim to speed up the maximization of a monotone submodular function subject to a matroid constraint in the standard offline computational model. Other works have tried to get such a speed up by considering different computational models. Mirzasoleiman et al.~\cite{mirzasoleiman2016distributed} initiated the study of submodular maximization under a distributed (or Map-Reduce like) computational model. Barbosa et al.~\cite{pontebarbosa2015power,pontebarbosa2016new} then showed that the guarantees of both the natural greedy algorithm and Continuous Greedy can be obtained in this computational model (see also~\cite{gopal2024greedyml} for a variant of the algorithm of~\cite{pontebarbosa2016new} that has a reduced memory requirement). Balkanski \& Singer~\cite{balkanski2018adaptive1} initiated the study of algorithms that are parallelizable in a different sense. Specifically, they looked for submodular maximization algorithms whose objective function evaluations can be batched into a small number of sets such that the evaluations that should belong to each set can be determined based only on the results of evaluations from previous sets, which means that the evaluations of each set can be computed in parallel. Algorithms that have this property are called \emph{low adaptivity algorithms}, and such algorithms guaranteeing $(1 - \nicefrac{1}{e} - \eps)$-approximation for maximizing a monotone submodular function subject to a matroid constraint were obtained by Balkanski et al.~\cite{balkanski2022adaptive2}, Chekuri \& Quanrud~\cite{chekuri2019parallelizing}, and Ene et al.~\cite{ene2019submodular}. Interestingly, Li et al.~\cite{li2020polynomial} proved that no \emph{low adaptivity algorithm} can get a clean $(1 - \nicefrac{1}{e})$-approximation for the above problem, and getting such an approximation guarantee requires a polynomial number of sets of evaluations.

Online and streaming versions of submodular maximization problems have been studied since the works of Buchbinder et al.~\cite{buchbinder2019online} and Chakrabarti \& Kale~\cite{chakrabarti2015submodular}, respectively. The main motivations for these studies were to get algorithms that either can make decisions under uncertainty, or can process the input sequentially using a low space complexity. However, online and streaming algorithms also tend to be very fast. Specifically, the streaming algorithm of Feldman et al.~\cite{feldman2022streaming} guarantees $0.3178$-approximation for maximizing a monotone submodular function subject to a matroid constraint using a nearly-linear query complexity. The same approximation ratio was obtained by an online algorithm in the special cases of partition matroid and cardinality constraints~\cite{chan2018online}. See also~\cite{chekuri2015streaming,harshaw20K22power} for streaming algorithms that can handle intersections of multiple matroid constraints.

We conclude this section by mentioning the rich literature on maximization of submodular functions that are not guaranteed to be monotone. Following a long line of work~\cite{buchbinder2019constrained,chekuri2014submodular,ene2016constrained,feldman2011unified}, the state-of-the-art algorithm for the problem of maximizing a (not necessarily monotone) submodular function subject to a matroid constraint guarantees $0.401$-approximation~\cite{buchbinder2023constrained}. Oveis Gharan and Vondr{\'{a}}k~\cite{gharan2011submodular} proved that no sub-exponential time algorithm can obtain a better than $0.478$-approximation for this problem even when the matroid is a partition matroid, and recently, Qi~\cite{qi2022maximizing} showed that the same inapproximability result applies even to the special case of a cardinality constraint. For the even more special case of unconstrained maximization of a (not necessarily monotone) submodular function, Feige et al.~\cite{feige11maximizing} showed that no sub-exponential time algorithm can obtain a better than $\nicefrac{1}{2}$-approximation, and Buchbinder et al.~\cite{buchbinder2015tight} gave a matching randomized approximation algorithm, which was later derandomized by Buchbinder \& Feldman~\cite{buchbinder2018deterministic}.

\paragraph{Paper Structure.} Section~\ref{sec:preliminaries} formally defines the problem we consider and the notation that we use, as well as introduces some known results that we employ. Section~\ref{sec:algorithm} analyzes Algorithm~\ref{alg:basic}, and then presents and analyzes the variant of this algorithm used to prove Theorem~\ref{thm:main}. The analysis in Section~\ref{sec:algorithm} employs Proposition~\ref{prop:fast}, whose proof appears in Section~\ref{sec:approximate-local}. The paper concludes with Appendix~\ref{app:submodular_linear_sums}, which shows how our technique can be used also for maximizing sums of a monotone submodular function and a linear function subject to a matroid constraint.
\section{Preliminaries} \label{sec:preliminaries}

In this section, we formally define the problem we consider in this paper. We also introduce the notation used in the paper and some previous results that we employ.

Let $f\colon 2^\cN \to \bR$ be a set function.
The function $f$ is \emph{non-negative} if $f(S) \geq 0$ for every set $S \subseteq \cN$, and it is \emph{monotone} if $f(S) \leq f(T)$ for every two sets $S \subseteq T \subseteq \cN$.
Recall that we use the shorthand $f(u \mid S) \triangleq f(S \cup \{u\}) - f(S)$ to denote the marginal contribution of element $u$ to the set $S$. Similarly, given two sets $S, T \subseteq \cN$, we use $f(T \mid S) \triangleq f(S \cup T) - f(S)$ to denote the marginal contribution of $T$ to $S$. A function $f$ is \emph{submodular} if $f(S) + f(T) \geq f(S \cap T) + f(S \cup T)$ for every two sets $S, T \subseteq \cN$. Following is another definition of submodularity, which is known to be equivalent to the above one.
\begin{definition}
A set function $f\colon 2^\cN \to \bR$ is submodular if $f(u \mid S) \geq f(u \mid T)$ for every two sets $S \subseteq T \subseteq \cN$ and element $u \in \cN \setminus T$.
\end{definition}

The next lemmas follow form the definition of submodularity.

\begin{lemma}\label{lem:submodularA}
Let $f\colon 2^\cN \to \bR$ be a submodular function, and let $S\subseteq T\subseteq \cN$. It holds that 
\[
    f(T)-f(S) \geq \sum_{u\in T\setminus S}f(u \mid T-u)
    \enspace.
\]
\end{lemma}

\begin{proof}
Denote the elements of $T\setminus S$ by $u_1, \ldots, u_k$. Then, by the submodularity of $f$,
\[f(T)-f(S) = \sum_{i=1}^{k}f(u_i \mid S \cup \{u_1, \ldots, u_{i-1}\}) \geq \sum_{u\in T\setminus S}f(u \mid T-u) \enspace. \qedhere\] 
\end{proof}

\begin{lemma}\label{lem:submodularB}
Let $f\colon 2^\cN \to \bR$ be a submodular function, and let $S, T\subseteq \cN$. It holds that 
\[
    \sum_{u\in T}f(u \mid S) \geq f(T\mid S)
    \enspace.
\]
\end{lemma}

\begin{proof}
Denote the elements of $T\setminus S$ by $u_1, \ldots, u_k$. Then, by the submodularity of $f$,
\[ \sum_{u\in T}f(u \mid S) = \sum_{u\in T\setminus S}\mspace{-9mu}f(u \mid S)\geq  \sum_{i=1}^{k}f(u_i \mid S \cup \{u_1, \ldots, u_{i-1}\}) = f(T\cup S)- f(S) = f(T\mid S)
\enspace.
\qedhere\]
\end{proof}

An independence system is a pair $(\cN, \cI)$, where $\cN$ is a ground set and $\cI \subseteq 2^\cN$ is a non-empty collection of subsets of $\cN$ that is down-closed in the sense that $T \in \cI$ implies that every set $S \subseteq T$ also belongs to $\cI$. Following the terminology used for linear spaces, it is customary to refer to the sets of $\cI$ as \emph{independent} sets. Similarly, independent sets that are inclusion-wise maximal (i.e., they are not subsets of other independent sets) are called \emph{bases}, and the size of the largest base is called the \emph{rank} of the independence system. Throughout the paper, we use $n$ to denote the size of the ground set $\cN$ and $r$ to denote the rank of the independence system.

A \emph{matroid} is an independence system that also obeys the following exchange axiom: if $S, T$ are two independent sets such that $|S| < |T|$, then there must exist an element $u \in T \setminus S$ such that $S \cup \{u\} \in \cI$. Matroids have been extensively studied as they capture many cases of interest, and at the same time, enjoy a rich theoretical structure (see, e.g.,~\cite{schrijver2003combinatorial}). However, we mention here only two results about them. The first result, which is an immediate corollary of the exchange axiom, is that all the bases of a matroid are of the same size (and thus, every independent set of this size is a base). The other result is given by the next lemma. In this lemma (and throughout the paper), given a set $S$ and element $u$, we use $S + u$ and $S - u $ as shorthands for $S \cup \{u\}$ and $S \setminus \{u\}$, respectively.
\begin{lemma}[Proved by~\cite{brualdi1969comments}, and can also be found as Corollary~39.12a in~\cite{schrijver2003combinatorial}] \label{le:perfect_matching_two_bases}
Let $A$ and $B$ be two bases of a matroid $\cM = (\cN, \cI)$. Then, there exists a bijection $h\colon A \setminus B \rightarrow B \setminus A$ such that for every $u \in A \setminus B$, $(B - h(u)) + u \in \cI$.
\end{lemma}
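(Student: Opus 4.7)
The plan is to proceed by induction on $|A \setminus B|$, using the symmetric basis exchange property of matroids as the key tool. In the base case $|A \setminus B| = 0$, we have $A = B$ (both bases have size $r$), and the empty bijection trivially works. For the inductive step I would pick any $u \in A \setminus B$, invoke symmetric basis exchange to produce some $v \in B \setminus A$ satisfying both $(A - u) + v \in \cI$ and $(B - v) + u \in \cI$, and set $h(u) := v$. Letting $A' := (A - u) + v$ then gives another base of $\cM$ with $A' \setminus B = (A \setminus B) \setminus \{u\}$ and $B \setminus A' = (B \setminus A) \setminus \{v\}$, so that $|A' \setminus B| = |A \setminus B| - 1$. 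The induction hypothesis, applied to the pair of bases $(A', B)$, yields a bijection $h'\colon A' \setminus B \to B \setminus A'$ obeying the exchange condition, and extending $h'$ by $h(u) := v$ gives the desired bijection $h\colon A \setminus B \to B \setminus A$: the exchange condition holds for $u$ by the choice of $v$, and for each $u' \neq u$ by induction.

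The main auxiliary fact is the symmetric basis exchange property itself. My plan here is to consider the fundamental circuit $C := C(u, B) \subseteq B + u$ (the unique circuit of $B + u$) together with the fundamental cocircuit $C^* := C^*(u, A) \subseteq (\cN \setminus A) + u$ (the unique cocircuit of $\cM$ meeting $A$ only at $u$). Standard characterizations identify $C \setminus \{u\}$ with $\{w \in B : (B - w) + u \in \cI\}$ and $C^* \setminus \{u\}$ with $\{w \notin A : (A - u) + w \in \cI\}$. Since both $C$ and $C^*$ contain $u$, it suffices to exhibit a second common element: any $v \in (C \cap C^*) \setminus \{u\}$ then automatically lies in $B \setminus A$ and satisfies both required independence conditions simultaneously.

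The hard part is therefore showing $|C \cap C^*| \geq 2$, i.e., the classical matroid fact that a circuit and a cocircuit cannot meet in exactly one element. My plan for this step is a short rank argument: assume for contradiction $C \cap C^* = \{e\}$, so $C - e \subseteq \cN \setminus C^*$. The set $\cN \setminus C^*$ is a hyperplane (a closed set of rank $r - 1$), so the independent set $C - e$ extends to a maximal independent set $I \subseteq \cN \setminus C^*$ with $|I| = r - 1$. But $I + e$ contains the circuit $C$, hence is dependent, forcing $e \in \mathrm{cl}(I) \subseteq \mathrm{cl}(\cN \setminus C^*) = \cN \setminus C^*$, which contradicts $e \in C^*$. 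Once this fact is in hand, the overall induction proceeds cleanly and produces the required bijection.
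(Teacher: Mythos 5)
The paper does not prove this lemma at all --- it is quoted from Brualdi and from Corollary~39.12a of Schrijver --- so there is no in-paper argument to compare against; your proposal is a self-contained proof, and it is correct. You correctly identify the one genuinely delicate point: the induction on $|A \setminus B|$ cannot be run with the ordinary (one-sided) exchange axiom, because after setting $h(u) := v$ you must recurse on the pair $(A', B)$ with $B$ unchanged, and $A' = (A - u) + v$ is only a base if the exchange also works in the direction $(A - u) + v \in \cI$; hence the need for the \emph{symmetric} exchange property, which is exactly Brualdi's original route. Your derivation of symmetric exchange from the fundamental circuit $C(u,B)$, the fundamental cocircuit $C^*(u,A)$, and the fact that a circuit and a cocircuit never meet in exactly one element is a standard and correct argument, and your rank/closure proof of that orthogonality fact (the complement of a cocircuit is a hyperplane, a maximal independent subset of it spans it, and $e$ would land in its closure) is sound. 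The only caveat is that the proof leans on a fair amount of imported machinery --- duality, cocircuits-as-complements-of-hyperplanes, the closure operator, uniqueness of the fundamental circuit --- each of which is itself a (standard) theorem from the independence axioms; this is a perfectly reasonable level of granularity for a cited classical lemma, but worth acknowledging that the argument is not axiomatic from scratch. An alternative route, closer to Schrijver's treatment, deduces the bijection in one shot from a Hall-condition/perfect-matching argument in the exchange bipartite graph rather than peeling off one element at a time.
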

One can extend the domain of the function $h$ from the last lemma to the entire set $A$ by defining $h(u) = u$ for every $u \in A \cap B$. This yields the following corollary.
\begin{corollary} \label{cor:perfect_matching_two_bases}
Let $A$ and $B$ be two bases of a matroid $\cM = (\cN, \cI)$. Then, there exists a bijection $h\colon A \rightarrow B$ such that for every $u \in A$, $(B - h(u)) + u \in \cI$ and $h(u) = u$ for every $u \in A \cap B$.
\end{corollary}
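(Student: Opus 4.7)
The plan is to derive this corollary as a direct extension of Lemma~\ref{le:perfect_matching_two_bases}. Lemma~\ref{le:perfect_matching_two_bases} already supplies a bijection $h_0\colon A \setminus B \to B \setminus A$ satisfying the required independence property on the symmetric-difference part, so all that remains is to handle the common part $A \cap B$ in a trivial way.

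First, I would define the extended map $h\colon A \to B$ by setting
\[
h(u) = \begin{cases} h_0(u) & \text{if } u \in A \setminus B, \\ u & \text{if } u \in A \cap B. \end{cases}
\]
Then I would verify that $h$ is a bijection from $A$ to $B$. Since $h_0$ is a bijection from $A \setminus B$ onto $B \setminus A$ and the identity is a bijection from $A \cap B$ onto itself, and since the ranges $B \setminus A$ and $A \cap B$ are disjoint with union equal to $B$, the combined map $h$ is indeed a bijection from $A = (A \setminus B) \cupdot (A \cap B)$ onto $B = (B \setminus A) \cupdot (A \cap B)$.

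Next, I would check the independence condition $(B - h(u)) + u \in \cI$ for each $u \in A$. For $u \in A \setminus B$ the condition is inherited directly from Lemma~\ref{le:perfect_matching_two_bases}. For $u \in A \cap B$, we have $h(u) = u$, so $(B - h(u)) + u = (B - u) + u = B$, which lies in $\cI$ because $B$ is a base of $\cM$. This covers all cases, and by construction $h(u) = u$ whenever $u \in A \cap B$, which is the last property claimed.

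There is essentially no obstacle here: the only content is Lemma~\ref{le:perfect_matching_two_bases}, and the corollary is a notational convenience extending the bijection's domain to all of $A$. The only minor point worth stating carefully is the disjointness argument ensuring that gluing $h_0$ with the identity on $A \cap B$ yields a bijection onto all of $B$, which follows immediately from the partitions $A = (A \setminus B) \cupdot (A \cap B)$ and $B = (B \setminus A) \cupdot (A \cap B)$.
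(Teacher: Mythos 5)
Your proposal is correct and follows exactly the same approach as the paper: the paper derives the corollary by extending the bijection from Lemma~\ref{le:perfect_matching_two_bases} to all of $A$ via the identity on $A \cap B$, which is precisely what you do. Your write-up merely makes explicit the routine checks (bijectivity of the glued map, the trivial independence condition $(B - u) + u = B \in \cI$ for $u \in A \cap B$) that the paper leaves implicit.
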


In this paper, we study the following problem. Given a non-negative monotone submodular function $f\colon 2^\cN \to \nnR$ and a matroid $\cM = (\cN, \cI)$ over the same ground set, we would like to find an independent set of $\cM$ that maximizes $f$ among all independent sets of $\cM$. As is mentioned above, it is standard in the literature to assume access to the objective function $f$ and the matroid $\cM$ via two oracles. The first oracle is called \emph{value oracle}, and given a set $S \subseteq \cN$ returns $f(S)$. The other oracle is called \emph{independence oracle}, and given a set $S \subseteq \cN$ indicates whether $S \in \cI$.
When the function $f$ happens to be linear, a well-known greedy algorithm can be used to exactly solve the  problem we consider  using $O(n)$ value and independence oracle queries (see, e.g., Theorem~40.1 of~\cite{schrijver2003combinatorial}). Recall that our main result is a non-oblivious local search algorithm guaranteeing $(1 - 1/e - \eps)$-approximation for the general case of this problem. To determine whether this algorithm has successfully converged, we sometimes employ the algorithm given by the next lemma, which obtains a worse approximation ratio for the same problem.
\begin{lemma}[Lemma $3.2$ in~\cite{buchbinder2017comparing}, based on \cite{babanidiyuru2014fast}]\label{lem:fastapprox}
Given a non-negative monotone submodular function $f$ and a matroid $\cM = (\cN, \cI)$, there exists a deterministic $\nicefrac{1}{3}$-approximation algorithm for $\max\{f(S) \mid S\in \cI\}$ that has a query complexity of $O(n \log r)$.
\end{lemma}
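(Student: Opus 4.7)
The plan is to instantiate a thresholding (or ``bucketed'') variant of the standard greedy algorithm. First, compute $d = \max_{v \in \cN} f(v)$ with $n$ value queries; note that monotonicity and subadditivity give $d \leq f(OPT) \leq r d$. Then perform $K = O(\log r)$ passes indexed by $i = 0, 1, \dotsc, K$, with geometrically decreasing thresholds $\tau_i = d/2^i$ down to $\tau_K = \Theta(d/r)$. In pass $i$, scan every $u \in \cN \setminus S$ once and add $u$ to $S$ whenever $S + u \in \cI$ and $f(u \mid S) \geq \tau_i$. Each pass issues at most one value query and one independence query per element, so the total cost is $n + O(n \log r) = O(n \log r)$ queries, matching the claimed complexity.

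For the approximation ratio, the plan is to show $\sum_{u \in OPT \setminus S} f(u \mid S) \leq 2\, f(S)$, which by the submodular/monotone inequality $f(OPT) \leq f(S) + \sum_{u \in OPT \setminus S} f(u \mid S)$ yields $f(S) \geq \tfrac{1}{3} f(OPT)$. First extend the output $S$ to a base $\hat S$ of $\cM$, and apply Corollary~\ref{cor:perfect_matching_two_bases} to the bases $OPT$ and $\hat S$ to obtain a bijection $h\colon OPT \to \hat S$ with $\hat S - h(u) + u \in \cI$ and $h(u) = u$ on $OPT \cap \hat S$. Partition $OPT \setminus S$ into class (B), those elements with $S + u \notin \cI$ at the end (so by down-closedness $h(u) \in S$), and class (A), the rest, which at the final pass were rejected because $f(u \mid S_u^{\text{last}}) < \tau_K$ and hence, by submodularity, $f(u \mid S) < \tau_K$.

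Class (A) contributes at most $r \cdot \tau_K$; choosing $\tau_K$ small enough (at most a suitable constant times $d/r$, absorbed in $f(S)$ because $d \leq f(S)$ as $S \neq \varnothing$ after the first pass) bounds this by a small fraction of $f(S)$. For class (B), charge each $u$ to $v = h(u) \in S$: when $v$ was added at threshold $\tau_v$, we had $S_v + u \in \cI$ (because $S_v \subseteq \hat S - v$ and $\hat S - v + u \in \cI$), so $u$'s last rejection came at threshold at most a constant multiple of $\tau_v$, giving $f(u \mid S) \leq f(u \mid S_u^{\text{last}}) \leq c \tau_v \leq c\, f(v \mid S_v)$ by submodularity; summing over the $v$'s telescopes to $c\, f(S)$. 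The main obstacle is tuning the geometric ratio and the stopping threshold so that the sum of the class~(A) and class~(B) contributions is at most $2 f(S)$, yielding the clean $\nicefrac{1}{3}$ bound without incurring an $\eps$ loss; this is where the careful accounting of the factor lost per threshold step (and the appeal to Corollary~\ref{cor:perfect_matching_two_bases} to avoid leaking an extra factor) must be done, and is the reason we cite~\cite{buchbinder2017comparing,babanidiyuru2014fast} rather than re-deriving the constants here.
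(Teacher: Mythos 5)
The paper does not prove this lemma; it is imported verbatim from Lemma~3.2 of~\cite{buchbinder2017comparing}, so there is no in-paper proof to compare against. Your plan---a descending-thresholds greedy with a charging argument via Corollary~\ref{cor:perfect_matching_two_bases}---is indeed the right mechanism, and the per-pass cost and pass count do give $O(n\log r)$ queries. But the specific constants you fix do not deliver a $\nicefrac{1}{3}$ bound, and the gap is not merely a matter of ``careful accounting'' that can be deferred: with geometric ratio $\nicefrac{1}{2}$ it is impossible.

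Here is the issue concretely. For a class~(B) element $u$ with $v=h(u)$ added in pass $i$, the clean conclusion ``$u$ was rejected at threshold $\tau_i$'' only holds if $u$ was scanned before $v$ in pass $i$; if $u$ is scanned after $v$, then after $v$'s insertion $u$ may no longer be addable, so the only threshold-based rejection you can certify is in pass $i-1$, giving $f(u\mid S) < \tau_{i-1} = \tau_i/\rho \le f(v\mid S_v)/\rho$, i.e.\ loss factor $1/\rho$ rather than $1$. With your choice $\rho=\nicefrac{1}{2}$ this is a factor of $2$, so the class~(B) total alone is already bounded only by $2(f(S)-f(\varnothing))\le 2f(S)$, \emph{exactly} exhausting the budget needed for $f(OPT)\le 3f(S)$. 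The class~(A) truncation term $r\tau_K$ then necessarily pushes the total strictly above $2f(S)$; note that this term is not negligible, since $r\tau_K=\Theta(d)$ is a constant fraction of $f(S)$. Consequently you only get $f(S)\ge \tfrac{1}{3+\Omega(1)}\,f(OPT)$, not $\nicefrac{1}{3}$. The fix is to use a milder ratio $\rho>\nicefrac{1}{2}$ (e.g.\ $\rho=\nicefrac{2}{3}$, giving per-element loss $\nicefrac{3}{2}$), so that $1+\nicefrac{1}{\rho}<3$; the remaining slack $2-\nicefrac{1}{\rho}>0$ then absorbs $r\tau_K$ by taking $\tau_K \le (2-\nicefrac{1}{\rho})\,d/r$, using $f(S)\ge d$ after the first pass. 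The pass count remains $\log_{1/\rho}(d/\tau_K)=O(\log r)$. Two smaller points: you should calibrate $d$ as $\max_v f(v\mid\varnothing)$ (and restrict to $v$ with $\{v\}\in\cI$) so that the first pass is guaranteed to add an element when $f(\varnothing)>0$, which is what justifies $f(S)\ge d$; and the bound $d\le f(OPT)$ also relies on singletons being independent, which is why the restriction matters.
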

\section{Algorithm} \label{sec:algorithm}

In this section, we present and analyze the algorithm used to prove our main result (Theorem~\ref{thm:main}). We do that in two steps. First, in Section~\ref{ssc:basic_algorithm}, we analyze Algorithm~\ref{alg:basic}, which is a simple idealized version of our final algorithm. Algorithm~\ref{alg:basic} does not run in polynomial time, but its analysis conveys the main ideas used in the analysis of our final algorithm. Then, in Section~\ref{ssc:implementation}, we show our final algorithm, which uses Proposition~\ref{prop:fast}, and has all the properties guaranteed by Theorem~\ref{thm:main}.

\subsection{Idealized Algorithm} \label{ssc:basic_algorithm}

In this section, we analyze the idealized version of our final algorithm that is given above as Algorithm~\ref{alg:basic}. In particular, we show that this idealized version guarantees $(1 - \nicefrac{1}{e} - \eps)$-approximation (but, unfortunately, its time complexity is not guaranteed to be polynomial). Recall that Algorithm~\ref{alg:basic} is parametrized by a positive integer value $\ell$ and additional positive values $\alpha_1, \ldots, \alpha_{\ell}$ to be determined later (we also define $\alpha_0 \triangleq 0$ for convenience). 
Algorithm~\ref{alg:basic} is a non-oblivious local search algorithm whose solution consists of $\ell$ disjoint sets $S_1, S_2, \ldots, S_{\ell}$ whose union is a base of $\cM$. The local search is guided by an auxiliary function 
\[g(S_1, \ldots, S_{\ell}) =  \sum_{J \subseteq [\ell]} \alpha_{|J|} \cdot f(S_J) \enspace,\]
where $S_{J} \triangleq \cupdot_{i\in J}S_j$ for any $J\subseteq [\ell]$. Intuitively, this objective function favors solutions that are non-fragile in the sense that removing a part of the solution has (on average) a relatively small effect on the solution's value.


The output set of Algorithm~\ref{alg:basic} is the set $S_{[\ell]}$, which is clearly a base of the matroid $\cM$. The rest of this section is devoted to analyzing the approximation guarantee of Algorithm~\ref{alg:basic}. We begin with the following simple observation, which follows from the fact that Algorithm~\ref{alg:basic} terminates with a local maximum with respect to $g$. In this observation, and all the other claims in this section, the sets mentioned represent their final values, i.e., their values when Algorithm~\ref{alg:basic} terminates.
\begin{observation} \label{obs:local_minimum}
Let $j,k\in [\ell]$. For every $u\in S_k$, it holds that
\begin{equation}
    \sum_{J\subseteq [\ell], k \in J, j \not \in J} \mspace{-27mu} \alpha_{|J|} \cdot f(u \mid S_J-u) \geq \sum_{J\subseteq [\ell], k \not \in J, j \in J} \mspace{-27mu} \alpha_{|J|} \cdot f(u \mid S_J) \enspace. \label{ineq112}
\end{equation}
Furthermore, for any $v \in \cN \setminus S_{[\ell]}$ such that $S_{[\ell]}-u+v\in \cI$, it also holds that
 \begin{equation}
\sum_{J \subseteq [\ell], k \in J} \mspace{-9mu} \alpha_{|J|} \cdot f(u \mid S_J-u) \geq \sum_{J \subseteq [\ell], j \in J} \mspace{-9mu} \alpha_{|J|} \cdot f(v \mid S_J) \enspace. \label{ineq111}
\end{equation}   
\end{observation}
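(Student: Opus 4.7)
The plan is to derive both inequalities directly from the hypothesis that the algorithm terminates at a local maximum of $g$, meaning that neither Step 1 nor Step 2 is capable of producing a strict improvement. Concretely, for every legal swap the quantity
\[
\Delta \;\triangleq\; g(S'_1,\ldots,S'_\ell) - g(S_1,\ldots,S_\ell) \;=\; \sum_{J \subseteq [\ell]} \alpha_{|J|}\bigl(f(S'_J) - f(S_J)\bigr)
\]
must be $\leq 0$; I would then expand this sum by splitting on the intersection pattern $J \cap \{k,j\}$ and verify that each inequality is exactly what this expansion produces.

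For Inequality~\eqref{ineq112}, I would take the Step~1 swap that moves $u$ from $S_k$ into $S_j$, yielding $S'_k = S_k - u$, $S'_j = S_j + u$, and all other $S'_i = S_i$. A subset $J$ only contributes to $\Delta$ when exactly one of $k, j$ lies in $J$: if $k \in J,\ j \notin J$ then $S'_J = S_J - u$ and the term changes by $-\alpha_{|J|}\, f(u \mid S_J - u)$, while if $k \notin J,\ j \in J$ then $S'_J = S_J + u$ and the term changes by $+\alpha_{|J|}\, f(u \mid S_J)$. Summing these and demanding $\Delta \leq 0$ is exactly Inequality~\eqref{ineq112}; no property of $f$ beyond it being a set function is used.

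For Inequality~\eqref{ineq111} I would apply the same template to the Step~2 swap that replaces $u \in S_k$ with $v \in \cN \setminus S_{[\ell]}$ inserted into $S_j$. The case analysis now has four sub-cases; the novel one is $k, j \in J$, where $S'_J = (S_J - u) + v$ contributes $\alpha_{|J|}\bigl(f(v \mid S_J - u) - f(u \mid S_J - u)\bigr)$ to $\Delta$. Aggregating all four cases and invoking $\Delta \leq 0$ yields
\[
\sum_{J : k \in J} \alpha_{|J|}\, f(u \mid S_J - u)
\;\geq\;
\sum_{J : k \notin J,\, j \in J} \alpha_{|J|}\, f(v \mid S_J)
\;+\;
\sum_{J : k \in J,\, j \in J} \alpha_{|J|}\, f(v \mid S_J - u) \enspace.
\]
To finish, I would apply submodularity to the second sum on the right: since $v \notin S_{[\ell]} \supseteq S_J$ and $S_J - u \subseteq S_J$, we have $f(v \mid S_J - u) \geq f(v \mid S_J)$, after which the two right-hand sums collapse into the single sum $\sum_{J : j \in J} \alpha_{|J|}\, f(v \mid S_J)$ appearing in Inequality~\eqref{ineq111}.

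I do not anticipate any real obstacle; the argument is essentially bookkeeping over the $2^\ell$ subsets, with a single use of submodularity at the very end. The main things to watch are sign-consistency in the case analysis and the degenerate choices (for example $j = k$, where both sides of Inequality~\eqref{ineq112} are empty sums and the claim holds trivially, and where Step~2 collapses into a simple in-place substitution within $S_k$ that is already covered by the general computation above).
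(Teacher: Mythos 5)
Your proposal is correct and follows essentially the same route as the paper: Inequality~\eqref{ineq112} is read directly off local optimality with respect to Step~1, and Inequality~\eqref{ineq111} is obtained from local optimality with respect to Step~2 plus a single application of submodularity to the doubly-indexed term ($k, j \in J$). The paper phrases that submodularity step as $f(S_J-u+v)\geq f(S_J+v)+f(S_J-u)-f(S_J)$ rather than your equivalent $f(v\mid S_J-u)\geq f(v\mid S_J)$, but this is the same inequality; your extra remarks about the $j=k$ degeneracy are harmless bookkeeping that the paper leaves implicit.
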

\begin{proof}
Inequality~\eqref{ineq112} is an immediate consequence of the fact that, when  Algorithm~\ref{alg:basic} terminates, Step~1 cannot increase $g$ by moving $u$ from $S_k$ to $S_j$. Such a change removes $u$ from all $S_J$ such that $k\in J$ and $j\notin J$ and adds $u$ to all $S_J$ such that $k\notin J$ and $j\in J$. The LHS measures the decrease in $g$ due to the first part, while the RHS measures the increase of $g$ due to the second part. As this move does not increase $g$, the RHS is at most the LHS. 

Similarly, Inequality~\eqref{ineq111} follows since Step~2 of Algorithm~\ref{alg:basic} cannot increase $g$ by removing $u$ from $S_k$ and adding $v\in \cN \setminus S_{[\ell]}$ to $S_j$. Removing $u$ from $S_k$ decreases $g$ by the LHS of Inequality~\eqref{ineq111}. Then, adding $v$ to $S_j$ increases $g$ by 
\begin{align*}
   \sum_{J \subseteq [\ell], j \in J, k\in J}  \alpha_{|J|} \cdot f(v \mid S_J-u) + \sum_{J \subseteq [\ell], j \in J, k\notin J}  \alpha_{|J|} \cdot f(v \mid S_J) \geq \sum_{J \subseteq [\ell], j \in J}  \alpha_{|J|} \cdot f(v \mid S_J) 
\end{align*}
where the last inequality holds by the submodularity of $f$. As this valid option of Step~2 cannot increase $g$, the RHS of Inequality~\eqref{ineq111} is at most its LHS.
%
\end{proof}


The next lemma uses the previous observation to get an inequality relating the value of a linear combination of the values of the sets $S_J$ to the value of the optimal solution. Note that since $f$ is monotone, there must be an optimal solution for our problem that is a base of $\cM$. Let us denote such an optimal solution by $OPT$. 

\begin{lemma}\label{lem:main}
Let $\alpha_{\ell + 1} \triangleq 0$. Then,
\[\sum_{J \subseteq [\ell]} 
\Big[\alpha_{|J|} \cdot |J|\cdot \Big(1+\frac{1}{\ell}\Big) -\alpha_{|J|+1} \cdot (\ell-|J|)\Big] \cdot f(S_J) \geq \sum_{i=1}^{\ell}\binom{\ell-1 }{i-1} \cdot \alpha_{i} \cdot f(OPT)\enspace.\]
\end{lemma}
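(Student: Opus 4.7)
The plan is to combine the local-optimality inequalities of Observation~\ref{obs:local_minimum} with a matching obtained from the matroid exchange property, and then use two submodularity bounds (a standard estimate on the RHS, and a sharper \emph{layered} estimate on the LHS) together with a reindexing to obtain the target. Since $f$ is monotone, I can assume $OPT$ is a base and invoke Corollary~\ref{cor:perfect_matching_two_bases} with $A=OPT$ and $B=S_{[\ell]}$ to obtain a bijection $h\colon OPT\to S_{[\ell]}$ with $(S_{[\ell]}-h(v))+v\in\cI$ for every $v\in OPT$ and $h(v)=v$ whenever $v\in OPT\cap S_{[\ell]}$. Write $u(v)=h(v)$ and let $k(v)\in[\ell]$ be the unique layer with $u(v)\in S_{k(v)}$.

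The pivotal step is to verify the uniform inequality
\[
    \sum_{J\subseteq[\ell]:\,k(v)\in J}\alpha_{|J|}\,f(u(v)\mid S_J-u(v))
    \;\geq\;\sum_{J\subseteq[\ell]:\,j\in J}\alpha_{|J|}\,f(v\mid S_J)
    \qquad \forall v\in OPT,\ j\in[\ell].
\]
For $v\in OPT\setminus S_{[\ell]}$ this is precisely~\eqref{ineq111} applied with $u=u(v)$ and $k=k(v)$. For $v\in OPT\cap S_{[\ell]}$ we have $u(v)=v$ and $f(v\mid S_J)=0$ whenever $k(v)\in J$; hence the RHS collapses to its $k(v)\notin J$, $j\in J$ part, which is at most the $j\notin J$ part of the LHS by~\eqref{ineq112}, and the remaining $k(v)\in J$, $j\in J$ terms on the LHS are non-negative.

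Next, I will sum this master inequality over all $v\in OPT$ and all $j\in[\ell]$. The LHS is independent of $j$, contributing an overall factor of $\ell$; and because $h$ is a bijection, the constraint $k(v)\in J$ turns $\sum_{v\in OPT}$ into $\sum_{u\in S_J}$. This yields
\[
    \ell\sum_{J\subseteq[\ell]}\alpha_{|J|}\sum_{u\in S_J}f(u\mid S_J-u)
    \;\geq\;\sum_{J\subseteq[\ell]}|J|\,\alpha_{|J|}\sum_{v\in OPT}f(v\mid S_J).
\]
On the RHS I apply the standard submodularity estimate $\sum_{v\in OPT}f(v\mid S_J)\geq f(OPT)-f(S_J)$. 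On the LHS, rather than using the loose bound $\sum_{u\in S_J}f(u\mid S_J-u)\leq f(S_J)-f(\varnothing)$, I will exploit the layered structure of $S_J$ to get the sharper
\[
    \sum_{u\in S_J}f(u\mid S_J-u)\;\leq\;\sum_{k\in J}f(S_k\mid S_{J-k})\;=\;|J|\,f(S_J)-\sum_{k\in J}f(S_{J-k}),
\]
obtained by ordering within each layer $S_k$ and telescoping there.

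Substituting both bounds and rearranging produces the inequality
$\sum_J(\ell+1)|J|\alpha_{|J|}f(S_J)-\ell\sum_J\alpha_{|J|}\sum_{k\in J}f(S_{J-k})\geq\sum_J|J|\alpha_{|J|}f(OPT)$.
The final manoeuvre is to reindex the double sum by setting $J'=J-k$: for each $J'$ there are exactly $\ell-|J'|$ choices of $k\notin J'$ producing it, and $\alpha_{|J|}=\alpha_{|J'|+1}$, so
$\sum_{J}\alpha_{|J|}\sum_{k\in J}f(S_{J-k})=\sum_{J'}(\ell-|J'|)\alpha_{|J'|+1}f(S_{J'})$.
Plugging this in, dividing through by $\ell$, and using $\tfrac{1}{\ell}\sum_{J}|J|\alpha_{|J|}=\sum_{i}\binom{\ell-1}{i-1}\alpha_i$, converts both sides into the form claimed by the lemma. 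The main obstacle is the choice of the layered submodularity bound above: any coarser bound leaves a residual $f(\varnothing)$ term on the right-hand side and fails to produce the $-(\ell-|J|)\alpha_{|J|+1}$ coefficient on the left, so the reindexing only closes the argument once this tighter estimate is in place.
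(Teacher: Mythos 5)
Your proof is correct and takes essentially the same route as the paper: the same master inequality (the paper's Inequality~\eqref{ineq111-like}) derived by the same case split on whether $v$ lies in $S_{[\ell]}$, the same averaging over $j$, and the same telescoping submodularity bound grouped by layers $S_k$. The only differences are cosmetic — you orient the bijection $h$ from $OPT$ to $S_{[\ell]}$ rather than the reverse, and you carry out the layered bound and the $J' = J - k$ reindexing as two explicit steps, whereas the paper packages them into one algebraic identity before applying submodularity.
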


\begin{proof}
Since $OPT$ and $S_{[\ell]}$ are both bases of $\cM$, Corollary~\ref{cor:perfect_matching_two_bases} guarantees that there exists a bijective function $h\colon S_{[\ell]} \to OPT$ such that $S_{[\ell]} - u + h(u) \in \cI$ for every $u \in S_{[\ell]}$ and $h(u) = u$ for every $u \in S_{[\ell]} \cap OPT$.
We claim that the following inequality holds for every element $u \in S_k$ and integer $j\in[\ell]$. 
\begin{equation} \label{ineq111-like}
\sum_{J \subseteq [\ell], k \in J} \mspace{-9mu} \alpha_{|J|} \cdot f(u \mid S_J-u) \geq \sum_{J \subseteq [\ell], j \in J} \mspace{-9mu} \alpha_{|J|} \cdot f(h(u) \mid S_J)
\enspace.
\end{equation}
If $u \not \in OPT$, then the last inequality is an immediate corollary of Inequality~\eqref{ineq111} since $h(u) \not \in S_{[\ell]}$. Thus, we only need to consider the case of $u \in OPT$. To handle this case, we notice that the monotonicity of $f$ implies that
\[
 \sum_{J\subseteq [\ell], k \in J, j \in J} \mspace{-27mu} \alpha_{|J|} \cdot f(u \mid S_J-u) \geq 0 = \sum_{J\subseteq [\ell], k \in J, j \in J} \mspace{-27mu} \alpha_{|J|} \cdot f(u \mid S_J)
 \enspace,
\]
and adding this inequality to Inequality~\eqref{ineq112} yields
\[
\sum_{J \subseteq [\ell], k \in J} \mspace{-9mu} \alpha_{|J|} \cdot f(u \mid S_J-u) \geq \sum_{J \subseteq [\ell], j \in J} \mspace{-9mu} \alpha_{|J|} \cdot f(u \mid S_J)
\enspace,
\]
which implies Inequality~\eqref{ineq111-like} since $h(u) = u$ when $u \in S_k \cap OPT \subseteq S_{[\ell]} \cap OPT$.

Averaging Inequality~\eqref{ineq111-like} over all $j\in [\ell]$, we get
\[
\sum_{J \subseteq [\ell], k \in J} \mspace{-9mu} \alpha_{|J|} \cdot f(u \mid S_J-u) \geq  \sum_{J \subseteq [\ell]} \frac{|J|}{\ell} \cdot \alpha_{|J|} \cdot f(h(u) \mid S_J) \enspace,
\label{ineq-m}
\]
and adding this inequality up over all $u \in S_{[\ell]}$ implies

\begin{align*}
\sum_{J \subseteq [\ell]} 
\left[\alpha_{|J|} \cdot |J| -\alpha_{|J|+1} \cdot (\ell-|J|)\right] \cdot f(S_J) &= 
\sum_{J \subseteq [\ell]}\sum_{ k \in J} 
\alpha_{|J|} \cdot [f(S_J) -f(S_{J\setminus \{k\}})]\\
& \geq \sum_{J \subseteq [\ell] } \sum_{ k \in J}   \alpha_{|J|} \cdot \Big(\sum_{u\in S_k} f(u \mid S_J-u)\Big) \\
& = \sum_{k=1}^{\ell}\sum_{u\in S_k}\sum_{J \subseteq [\ell], k \in J} \mspace{-9mu}  \alpha_{|J|} \cdot f(u \mid S_J-u) \\
& \geq \sum_{k=1}^{\ell}\sum_{u\in S_k}\sum_{J \subseteq [\ell]} \frac{|J|}{\ell} \cdot \alpha_{|J|} \cdot f(h(u) \mid S_J) \\
& =  \sum_{J \subseteq [\ell]} \frac{|J|}{\ell} \cdot \alpha_{|J|} \cdot \Big( \sum_{u\in S_{[\ell]}} f(h(u) \mid S_J)\Big) \\
& \geq \sum_{J \subseteq [\ell]}\frac{|J|}{\ell} \cdot \alpha_{|J|} \cdot f(OPT \mid S_J) \\ 
& \geq \sum_{J \subseteq [\ell]}\frac{|J|}{\ell} \cdot \alpha_{|J|} \cdot \left[f(OPT) -f(S_J)\right] 
\enspace,
\end{align*}
where the first inequality holds by Lemma \ref{lem:submodularA}, the 
penultimate inequality holds by Lemma \ref{lem:submodularB}, and the last inequality follows from $f$'s monotonicity. The lemma now follows by rearranging the last inequality since
\[\sum_{J \subseteq [\ell]}\frac{|J|}{\ell} \cdot \alpha_{|J|} \cdot f(OPT) = \sum_{i=1}^{\ell}\binom{\ell }{i} \cdot \frac{i}{\ell} \cdot \alpha_{i} \cdot f(OPT) = \sum_{i=1}^{\ell}\binom{\ell-1 }{i-1} \cdot \alpha_{i} \cdot f(OPT)
\enspace.
\qedhere
\]

\end{proof}

To get an approximation guarantee for Algorithm~\ref{alg:basic} from the last lemma, we need to make the coefficient of $f(S_J)$ zero for every $J \not \in \{\varnothing, [\ell]\}$. Setting $\alpha_1=1$, which is simply a scaling, gives a recurrence relation for choosing appropriate values for the parameters $\alpha_1, \alpha_2, \dotsc, \alpha_\ell$ leading to the following proposition.   
\propApproximationBasic*
\begin{proof}
The second part of the proposition follows from the first part by the non-negativity of $f$ and the observation that, for $\ell \geq 2$, we have
\[
    (1 + 1/\ell)^{-\ell}
    \leq
	\frac{1}{e(1 - 1/\ell)} \leq \frac{1}{e}(1+\frac{2}{\ell}) = \frac{1}{e} + O(\eps)
    \enspace.
\]
Thus, we concentrate on proving the first part of the theorem.

The values we have chosen for $\alpha_i$ imply that, for every $i \in [\ell - 1]$,
\begin{align*}
    \alpha_{i+1} & = \frac{(1 + 1/\ell)^{i}}{\binom{\ell - 1}{i}}  = (1+\frac{1}{\ell})\cdot \frac{(1 + 1/\ell)^{i - 1}}{\binom{\ell - 1}{i - 1} \cdot \frac{\ell-i}{i}}  = (1+\frac{1}{\ell}) \cdot \frac{i}{\ell-i} \cdot \alpha_i \enspace.
\end{align*}
Thus, for every $i \in [\ell - 1]$, we have the equality $\alpha_i \cdot i (1 + 1/\ell) - \alpha_{i + 1}(\ell - i) = 0$, and plugging this equality into Lemma~\ref{lem:main} yields
\begin{align*}
    \alpha_\ell \cdot (\ell + 1) \cdot f(S_{[\ell]}) - \alpha_1 \cdot \ell \cdot f(\varnothing)
    ={} &
\sum_{J \subseteq [\ell]} \Big[\alpha_{J} \cdot |J| \cdot \Big(1+\frac{1}{\ell}\Big) -\alpha_{|J|+1} \cdot (\ell - |J|)\Big] \cdot f(S_{[\ell]}) \\
    \geq{} &
    \sum_{i=1}^{\ell} (1 + 1/\ell)^{i - 1} \cdot f(OPT)
    =
    \ell \cdot [(1 + 1/\ell)^\ell - 1] \cdot f(OPT)
    \enspace.
\end{align*}
The proposition now follows by plugging the values $\alpha_\ell = (1 + 1/\ell)^{\ell - 1}$ and $\alpha_1 = 1$ into the last inequality, and rearranging.
%
%
 \end{proof}
 
\subsection{Final Algorithm}\label{ssc:implementation}

In this section, we describe the final version of our algorithm, which is similar to the idealized version (Algorithm~\ref{alg:basic}) studied in Section~\ref{ssc:basic_algorithm}, but uses Proposition~\ref{prop:fast} to get the query complexities stated in Theorem~\ref{thm:main}.

Algorithm~\ref{alg:basic} maintains $\ell$ disjoint sets $S_1, S_2, \dotsc, S_\ell$, and accordingly, the auxiliary function $g$ used to guide it accepts these $\ell$ sets as parameters. To use Proposition~\ref{prop:fast}, we need to replace this auxiliary function with a function accepting only a single parameter. The first step towards this goal is defining a new ground set $\cN' \triangleq \cN \times [\ell]$. Intuitively, every pair $(u, i)$ in this ground set represents the assignment of element $u$ to $S_i$. Given this intuition, we get a new auxiliary function $g' \colon 2^{\cN'} \to \nnR$ as follows. For every set $S \subseteq \cN'$ and for every $J\subseteq [\ell]$, let $\pi_J(S) \triangleq \{u \in \cN \mid \exists_{i \in J}\; (u, i) \in S\}$ be the set of elements that intuitively should be in the set $\cup_{i \in J} S_i$ according to $S$. Then, for every $S \subseteq \cN'$,
\[
	g'(S)
	\triangleq
	g\left(\pi_1(S), \ldots, \pi_{\ell}(S)\right)
	=
	\sum_{J\subseteq [\ell]}\alpha_{|J|} \cdot f(\pi_J(S))
	\enspace.
\]
We also need to lift the matroid $\cM$ to a new matroid over the ground set $\cN'$. Since the sets $S_1, S_2, \dotsc, S_\ell$ should be kept disjoint, for every $u \in \cN$, the elements of $\{(u, i) \mid i \in [\ell]\}$ should be parallel elements in the new matroid.\footnote{Elements $u$ and $v$ of a matroid are parallel if (i) they cannot appear together in any independent set of the matroid, and (ii) for every independent set $S$ that contains $u$ (respectively, $v$), the set $S - u + v$ (respectively, $S - v + u$) is also independent.} Thus, it is natural to define the new matroid as $\cM' = (\cN', \cI')$, where $\cI' \triangleq \{S \subseteq \cN' \mid \forall_{u \in \cN}\; |S \cap (\{u\} \times [\ell])| \leq 1 \text{ and } \pi_{[\ell]}(S) \in \cI\}$ (we prove below that $\cM'$ is indeed a matroid).

Notice that a value oracle query to $g'$ can be implemented using $2^\ell$ value oracle queries to $f$, and an independence oracle query to $\cM'$ can be implemented using a single independence oracle query to $\cM$. The following two lemmata prove additional properties of $g'$ and $\cM'$, respectively.
\begin{lemma}
The function $g'$ is non-negative, monotone and submodular.
\end{lemma}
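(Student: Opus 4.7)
The plan is to decompose $g'$ into the contributions of individual terms and verify all three properties termwise. For each $J \subseteq [\ell]$, define $f_J \colon 2^{\cN'} \to \bR$ by $f_J(S) \triangleq f(\pi_J(S))$, so that $g' = \sum_{J \subseteq [\ell]} \alpha_{|J|} \cdot f_J$. Since each $\alpha_{|J|}$ is non-negative, it suffices to prove that every $f_J$ is non-negative, monotone, and submodular, because non-negativity, monotonicity, and submodularity are all preserved by non-negative linear combinations. Non-negativity of $f_J$ is immediate from the non-negativity of $f$.

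For monotonicity, I would first record the easy fact that $\pi_J$ is itself monotone as a set-map: if $S \subseteq T \subseteq \cN'$ and $u \in \pi_J(S)$, then some $(u,i) \in S$ with $i \in J$ witnesses $u$, and this pair also lies in $T$, so $u \in \pi_J(T)$. Consequently $\pi_J(S) \subseteq \pi_J(T)$, and monotonicity of $f$ gives $f_J(S) \leq f_J(T)$.

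For submodularity, I would fix an element $e = (u,i) \in \cN' \setminus T$ and a subset $S \subseteq T$, and show $f_J(e \mid S) \geq f_J(e \mid T)$ by case analysis. If $i \notin J$, the projection $\pi_J$ ignores the second coordinate $i$, so $\pi_J(S+e) = \pi_J(S)$ and $\pi_J(T+e) = \pi_J(T)$, making both marginals zero. If $i \in J$, then $\pi_J(S+e) = \pi_J(S) \cup \{u\}$ and $\pi_J(T+e) = \pi_J(T) \cup \{u\}$, so the marginals equal $f(u \mid \pi_J(S))$ and $f(u \mid \pi_J(T))$ respectively (or zero if $u$ is already in the corresponding projection). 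When $u \notin \pi_J(T)$, the submodularity of $f$ together with $\pi_J(S) \subseteq \pi_J(T)$ finishes the case; and when $u \in \pi_J(T) \setminus \pi_J(S)$, the right-hand side is zero and the left-hand side is $f(u \mid \pi_J(S)) \geq 0$ by monotonicity of $f$.

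The only mild subtlety is the last subcase, which genuinely uses monotonicity of $f$ rather than submodularity; without monotonicity the submodularity of $f_J$ could fail. Since monotonicity of $f$ is part of the standing assumption, this is not a real obstacle, and no substantial calculation is required beyond the case analysis above.
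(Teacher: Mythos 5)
Your proposal is correct and follows essentially the same route as the paper: reduce to per-$J$ terms, note non-negativity and monotonicity are inherited directly from $f$, and prove the submodularity inequality $f(u\mid \pi_J(S))\geq f(u\mid \pi_J(T))$ for each $J$ with $i\in J$ by splitting on whether $u\notin\pi_J(T)$ (use submodularity of $f$) or $u\in\pi_J(T)$ (use monotonicity of $f$). Your write-up is slightly more explicit about the decomposition $g'=\sum_J\alpha_{|J|} f_J$ and the trivial case $i\notin J$, but the underlying argument is identical to the paper's.
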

\begin{proof}
%
The non-negativity and monotonicity of $g'$ follow immediately from its definition and the fact that $f$ has these properties. Thus, we concentrate on proving that $g$ is submodular.
For every two sets $S \subseteq T \subseteq \cN'$ and element $v=(u,i)\not\in T$,
\begin{align*}
	g'(v \mid S) 
	= 
	\sum_{J\subseteq [\ell], i \in J} \mspace{-9mu} \alpha_{|J|} \cdot f(u \mid \pi_J(S)) 
	\geq
	\sum_{J\subseteq [\ell], i \in J} \mspace{-9mu} \alpha_{|J|} \cdot f(u \mid \pi_J(T))
	= g'(v \mid T)
	\enspace,
\end{align*}
where the inequality holds separately for each set $J$. If $u\not \in \pi_J(T)$, then the inequality holds for this $J$ by the submodularity of $f$, and otherwise, it holds by the monotonicity of $f$.
\end{proof}

\begin{lemma}
The above defined $\cM'$ is a matroid, and its has the same rank as $\cM$.
\end{lemma}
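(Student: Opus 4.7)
My plan is to verify the three defining properties of a matroid for $\cM' = (\cN', \cI')$ (nonempty, downward-closed, and the exchange axiom), and then compare ranks. The key observation tying everything together is that the first condition in the definition of $\cI'$ forces every $S \in \cI'$ to contain at most one element from each ``column'' $\{u\} \times [\ell]$, so the projection $\pi_{[\ell]}\colon \cI' \to 2^\cN$ restricted to $S$ is a bijection onto $\pi_{[\ell]}(S)$. In particular, $|S| = |\pi_{[\ell]}(S)|$ for every $S \in \cI'$. Nonemptiness is immediate ($\varnothing \in \cI'$), and downward-closedness follows because both conditions in the definition of $\cI'$ are preserved under taking subsets (using the downward-closedness of $\cI$ for the second one).

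The main step is the exchange axiom. Given $A, B \in \cI'$ with $|A| < |B|$, the bijection observation above yields $|\pi_{[\ell]}(A)| < |\pi_{[\ell]}(B)|$, and since $\pi_{[\ell]}(A), \pi_{[\ell]}(B) \in \cI$, the exchange axiom for $\cM$ produces an element $u \in \pi_{[\ell]}(B) \setminus \pi_{[\ell]}(A)$ with $\pi_{[\ell]}(A) + u \in \cI$. Pick the unique $i \in [\ell]$ for which $v \triangleq (u, i) \in B$; such an $i$ exists because $u \in \pi_{[\ell]}(B)$ and is unique because $B \in \cI'$. Since $u \notin \pi_{[\ell]}(A)$, we have $v \notin A$ and $A \cap (\{u\} \times [\ell]) = \varnothing$, so adding $v$ still keeps the per-column count at $\leq 1$. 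For any other element $w \in \cN$, the count $|(A+v) \cap (\{w\} \times [\ell])|$ is unchanged. Finally, $\pi_{[\ell]}(A + v) = \pi_{[\ell]}(A) + u \in \cI$, so $A + v \in \cI'$, completing the exchange step.

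For the rank claim, I will argue both directions using the same projection. On the one hand, for every $S \in \cI'$ we have $|S| = |\pi_{[\ell]}(S)| \leq r$, so the rank of $\cM'$ is at most $r$. On the other hand, given any base $B$ of $\cM$, the set $B' \triangleq \{(u,1) : u \in B\}$ satisfies $|B' \cap (\{u\} \times [\ell])| \leq 1$ for all $u$ and $\pi_{[\ell]}(B') = B \in \cI$, so $B' \in \cI'$ and $|B'| = r$. Hence the rank of $\cM'$ equals $r$.

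I do not expect a real obstacle here; the construction is essentially the standard ``parallel extension'' of a matroid (replacing each element by $\ell$ parallel copies), and the only thing that requires a brief unpacking is making sure the exchange axiom in $\cM$ lifts correctly through the projection, which is handled by the bijection between $S$ and $\pi_{[\ell]}(S)$ for $S \in \cI'$.
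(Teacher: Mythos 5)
Your proof is correct and follows essentially the same route as the paper's: both establish the exchange axiom by using the key fact that $|S| = |\pi_{[\ell]}(S)|$ for $S \in \cI'$, lifting the exchange axiom of $\cM$ through the projection $\pi_{[\ell]}$, and both prove the rank claim by the same two-sided argument (embedding via $B \times \{1\}$ and projecting via $\pi_{[\ell]}$). The only difference is that you spell out the independence-system verification and the uniqueness of the index $i$ a bit more explicitly, which the paper leaves implicit.
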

\begin{proof}
The fact that $\cM'$ is an independence system follows immediately from its definition and the fact that $\cM$ is an independence system. Thus, to prove that $\cM'$ is a matroid, it suffices to show that it obeys the exchange axiom of matroids. Consider two sets $S, T \in \cI'$ such that $|S| < |T|$. The membership of $S$ and $T$ in $\cI'$ guarantees that $\pi_{[\ell]}(S), \pi_{[\ell]}(T) \in \cI$, $|\pi_{[\ell]}(S)| = |S|$ and $|\pi_{[\ell]}(T)| = |T|$. Thus, since $\cM$ is a matroid, there must exist an element $u \in \pi_{[\ell]}(T) \setminus \pi_{[\ell]}(S)$ such that $\pi_{[\ell]}(S) + u \in \cI$. By the definition of $\pi_{[\ell]}$, the membership of $u$ in $\pi_{[\ell]}(T)$ implies that there is an index $i \in [\ell]$ such that $(u, i) \in T$. Additionally, $(u, i) \not \in S$ since $u \not \in \pi_{[\ell]}(S)$, and one can verify that $S + (u, i) \in \cI'$. This completes the proof that $\cM'$ obeys the exchange axiom of matroids.

It remains to prove that $\cM'$ has the same rank as $\cM$. On the one hand, we notice that for every set $S \in \cI$, we have that $S \times \{1\} \in \cI'$. Thus, the rank of $\cM'$ is not smaller than the rank of $\cM$. On the other hand, consider a set $S \in \cI'$. The fact that $S \in \cI'$ implies that $|\pi_{[\ell]}(S)| = |S|$ and $\pi_{[\ell]}(S) \in \cI$, and thus, the rank of $\cM$ is also not smaller than the rank of $\cM'$.
\end{proof}

We are now ready to present our final algorithm, which is given as Algorithm~\ref{alg:final}. This algorithm has a single positive integer parameter $\ell$. As mentioned in Section~\ref{ssc:contribution}, the set $S$ obtained by Algorithm~\ref{alg:final} is not a local maximum of $g'$ in the same sense that the output set of Algorithm~\ref{alg:basic} is a local maximum of $g$ (see Remark \ref{rem:localOPT}). Nevertheless, Lemma~\ref{lemma:approximation_final} shows that the set $S$ obeys basically the same approximation guarantee (recall that $OPT$ is a feasible solution maximizing $f$).

\begin{algorithm}[ht]
\caption{\textsc{Fast Non-Oblivious Local Search}} \label{alg:final}
Define the ground set $\cN'$, the function $g'$ and the matroid $\cM'$ as described in Section~\ref{ssc:implementation}. In the function $g'$, set the parameters $\alpha_1, \alpha_2, \dotsc, \alpha_\ell$ as in Proposition~\ref{prop:approximation_basic}.\\
Let $\eps' \gets \eps / (e(1 + \ln \ell))$.\\
Use either the deterministic or the randomized algorithm from Proposition~\ref{prop:fast} (with $S_0$ set to $\varnothing$) to find a set $S \in \cI'$ that with probability at least $1 - \eps'$ obeys
\begin{equation} \label{eq:final_alg_guarantee}
	\sum_{v\in T}g'(v \mid S - v)- \sum_{u\in S}g'(u \mid S-u) \leq \eps' \cdot g'(OPT') \qquad \forall T\in \cI'
	\enspace,
\end{equation}
where $OPT'$ is a set of $\cI'$ maximizing $g'$ (the term $g'(S_0)$ from the guarantee of Proposition~\ref{prop:fast} was dropped from the last inequality since it is non-negative). 
\label{line:maximizing_g_prime}

\Return $\pi_{[\ell]}(S)$.
\end{algorithm}

\begin{lemma} \label{lemma:approximation_final}
If a set $S \subseteq \cN'$ obeys Inequality~\eqref{eq:final_alg_guarantee}, then
\[
	f(\pi_{[\ell]}(S))
	\geq
	\big(1 - (1 + \nicefrac{1}{\ell})^{-\ell}\big)\cdot f(OPT) + (1 + \nicefrac{1}{\ell})^{-\ell} \cdot f(\varnothing) - \eps \cdot f(OPT)
  \enspace.
\]
\end{lemma}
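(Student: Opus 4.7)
My plan is to mimic the analysis of Lemma~\ref{lem:main}, deriving approximate analogs of the local-optimality conditions (Observation~\ref{obs:local_minimum}) from the single inequality~\eqref{eq:final_alg_guarantee}. I will assume $\pi_{[\ell]}(S)$ is a base of $\cM$; otherwise, we can extend $S$ to a set $\tilde S \in \cI'$ with $\pi_{[\ell]}(\tilde S)$ a base, noting that plugging $T = \tilde S$ into~\eqref{eq:final_alg_guarantee} yields $g'(\tilde S) - g'(S) \leq \eps' g'(OPT')$, so the error incurred by working with $\tilde S$ is absorbed into the final bound. For each $u \in \pi_{[\ell]}(S)$, let $k_u \in [\ell]$ be the unique index with $(u, k_u) \in S$, and use Corollary~\ref{cor:perfect_matching_two_bases} to fix a bijection $h \colon \pi_{[\ell]}(S) \to OPT$ with $\pi_{[\ell]}(S) - u + h(u) \in \cI$ and $h(u) = u$ for $u \in \pi_{[\ell]}(S) \cap OPT$.

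The central idea is to choose $T$'s in~\eqref{eq:final_alg_guarantee} that aggregate many element--slot pairs into a single application, so that the $\eps' g'(OPT')$ error is incurred only a constant number of times rather than once per element. Concretely, for each $j \in [\ell]$ I define
\begin{align*}
T^{\mathrm{out}}_j &\triangleq \{(h(u), j) : u \in \pi_{[\ell]}(S) \setminus OPT\} \cup \{(u, k_u) : u \in OPT \cap \pi_{[\ell]}(S)\}, \\
T^{\mathrm{in}}_j &\triangleq \{(u, j) : u \in OPT \cap \pi_{[\ell]}(S)\} \cup \{(u, k_u) : u \in \pi_{[\ell]}(S) \setminus OPT\},
\end{align*}
and verify both lie in $\cI'$ by checking $\pi_{[\ell]}(T^{\mathrm{out}}_j) = OPT$ and $\pi_{[\ell]}(T^{\mathrm{in}}_j) = \pi_{[\ell]}(S)$, with distinct first coordinates in each. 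Using the identity $\sum_{v \in T} g'(v \mid S - v) - \sum_{u \in S} g'(u \mid S - u) = \sum_{v \in T \setminus S} g'(v \mid S) - \sum_{u \in S \setminus T} g'(u \mid S - u)$, the $T^{\mathrm{out}}_j$ inequality only involves $u \in \pi_{[\ell]}(S) \setminus OPT$, and the $T^{\mathrm{in}}_j$ inequality only involves $u \in OPT \cap \pi_{[\ell]}(S)$.

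Summing the two families over $j \in [\ell]$, dividing by $\ell$, and expanding $g'$ in terms of $f$ (using that $f(u \mid \pi_J(S)) = 0$ when $u \in \pi_J(S)$, i.e.\ when $k_u \in J$) combines them into
\[
\frac{1}{\ell} \sum_{u \in \pi_{[\ell]}(S)} \sum_{J \subseteq [\ell]} |J| \alpha_{|J|} f(h(u) \mid \pi_J(S)) \leq \sum_{u \in S} g'(u \mid S - u) + 2 \eps' g'(OPT').
\]
Applying submodularity of $f$ to merge the $h(u)$-marginals into $f(OPT \mid \pi_J(S))$, then monotonicity to get $f(OPT \mid \pi_J(S)) \geq f(OPT) - f(\pi_J(S))$, and using the dual submodularity bound $\sum_u g'(u \mid S - u) \leq \sum_J [|J| \alpha_{|J|} - (\ell - |J|) \alpha_{|J| + 1}] f(\pi_J(S))$ (exactly as in the proof of Lemma~\ref{lem:main}), I obtain the approximate analog of Lemma~\ref{lem:main}:
\[
\sum_{i = 1}^{\ell} \binom{\ell - 1}{i - 1} \alpha_i f(OPT) \leq \sum_{J \subseteq [\ell]} \bigl[|J| \alpha_{|J|} (1 + \tfrac{1}{\ell}) - (\ell - |J|) \alpha_{|J| + 1}\bigr] f(\pi_J(S)) + 2 \eps' g'(OPT').
\]

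Plugging in the $\alpha_i$'s from Proposition~\ref{prop:approximation_basic} zeros out the coefficient of $f(\pi_J(S))$ for $1 \leq |J| \leq \ell - 1$, and dividing through by $\ell (1 + 1/\ell)^\ell$ yields the claimed bound up to an additive error of $\frac{2 \eps' g'(OPT')}{\ell (1 + 1/\ell)^\ell}$. To close the argument, I bound $g'(OPT') \leq f(OPT) \sum_{i=1}^{\ell} \binom{\ell}{i} \alpha_i = f(OPT) \sum_{i=1}^{\ell} (\ell / i)(1 + 1/\ell)^{i-1} \leq e \ell (1 + \ln \ell) f(OPT)$, and use $(1 + 1/\ell)^\ell \geq 2$ together with $\eps' = \eps / (e (1 + \ln \ell))$ to conclude the error is at most $\eps f(OPT)$. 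The main obstacle will be the careful bookkeeping when expanding and combining the $T^{\mathrm{out}}_j$ and $T^{\mathrm{in}}_j$ inequalities, in particular the factor $(\ell - 1)/\ell$ that appears in the $T^{\mathrm{in}}_j$ analysis (because the move is trivial when $j = k_u$) and which must be absorbed by the non-negativity of $g'(u \mid S - u)$.
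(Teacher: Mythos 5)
Your approach lands the same algebraic chain as the paper's (the Lemma~\ref{lem:main}-style rearrangement, the cancellation from the chosen $\alpha_i$'s, the $g'(OPT') \leq e\ell(1+\ln\ell)f(OPT)$ bound), and the factor-of-$2$ error you accrue is correctly absorbed via $(1+1/\ell)^\ell \geq 2$. But you reach that chain by a noticeably heavier route than the paper. The paper simply plugs the single family $T = OPT \times \{j\}$ into~\eqref{eq:final_alg_guarantee}, uses the pointwise bound $g'((u,j)\mid S) \leq g'((u,j)\mid S-(u,j))$ (trivial by monotonicity of $g'$), and averages over $j \in [\ell]$; this directly yields $\sum_{(u,k)\in S}\sum_{J\ni k}\alpha_{|J|} f(u\mid\pi_J(S)-u) \geq \sum_{u\in OPT}\sum_J \tfrac{|J|}{\ell}\alpha_{|J|} f(u\mid\pi_J(S)) - \eps' g'(OPT')$ with a single application of~\eqref{eq:final_alg_guarantee} per $j$, no bijection $h$, and no assumption that $\pi_{[\ell]}(S)$ is a base. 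Your version reintroduces the bijection from Lemma~\ref{lem:main} (which is needed there because Observation~\ref{obs:local_minimum} only gives per-element swaps, but is superfluous here because~\eqref{eq:final_alg_guarantee} already aggregates over all of $T$), and splits into $T^{\mathrm{out}}_j$, $T^{\mathrm{in}}_j$. Two consequences: you pay double error, and you must assume $\pi_{[\ell]}(S)$ is a base. Your proposed patch for the non-base case does not close cleanly: extending $S$ to $\tilde S$ gives $g'(\tilde S)-g'(S)\leq\eps' g'(OPT')$, and you could then bound $f(\pi_{[\ell]}(\tilde S))$, but monotonicity gives $f(\pi_{[\ell]}(S)) \leq f(\pi_{[\ell]}(\tilde S))$, which is the wrong direction, and the best direct bound $f(\pi_{[\ell]}(\tilde S)) - f(\pi_{[\ell]}(S)) \leq \eps' g'(OPT')/\alpha_\ell = O(\eps\ell)\cdot f(OPT)$ is $\ell$ times too large. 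In practice the $S$ produced by Proposition~\ref{prop:fast} is always a base of $\cM'$, so this doesn't bite the application, but as a proof of the lemma as stated (for arbitrary $S$ obeying~\eqref{eq:final_alg_guarantee}) it leaves a gap that the paper's cleaner choice of $T$ avoids entirely.
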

\begin{proof}
For every $j \in [\ell]$, we have $OPT \times \{j\} \in \cI'$. Thus, Inequality~\eqref{eq:final_alg_guarantee} implies that
\begin{align*}
	\sum_{u \in OPT} \sum_{J \subseteq [\ell], j \in J} \alpha_{|J|} \cdot f(&u \mid \pi_J(S))
	=
	\sum_{u \in OPT} \sum_{J\subseteq [\ell]}\alpha_{|J|} \cdot [f(\pi_J(S + (u, j))) - f(\pi_J(S))]\\
	={} &
	\sum_{u \in OPT} g'((u, j) \mid S)
	\leq
	\sum_{u \in OPT} g'((u, j) \mid S - (u, j))\\
	\leq{} &
	\sum_{(u, k) \in S} g'((u, k) \mid S- (u, k)) + \eps' \cdot g'(OPT')\\
	={} &
	\sum_{(u, k) \in S} \sum_{J \subseteq [\ell], k \in J} \mspace{-9mu}\alpha_{|J|} \cdot f(u \mid \pi_J(S) - u) + \eps' \cdot g'(OPT')
	\enspace,
\end{align*}
where the first inequality holds by the monotonicity of $g'$. Averaging this inequality over all $j \in [\ell]$ now yields
\[
	\sum_{(u, k) \in S} \sum_{J \subseteq [\ell], k \in J} \mspace{-9mu} \alpha_{|J|} \cdot f(u \mid \pi_J(S)-u)
	\geq
	\sum_{u \in OPT} \sum_{J \subseteq [\ell]} \frac{|J|}{\ell} \cdot \alpha_{|J|} \cdot f(u \mid \pi_J(S)) -\eps' \cdot g'(OPT')
	\enspace,
\]
which implies 
{\allowdisplaybreaks \begin{align*}
\sum_{J \subseteq [\ell]} 
\left[\alpha_{|J|} \cdot |J| -\alpha_{|J|+1} \cdot (\ell-|J|)\right] \cdot f&(\pi_J(S)) = 
\sum_{J \subseteq [\ell]}\sum_{k\in J}
\alpha_{|J|} \cdot [f(\pi_J(S)) -f(\pi_{J\setminus \{k\}}(S))]\\
& \geq \sum_{J \subseteq [\ell]}\sum_{k\in J} \alpha_{|J|} \cdot \Big(\sum_{(u, k)\in S} f(u \mid \pi_J(S)-u)\Big) \\
& = \sum_{(u, k)\in S} \sum_{J \subseteq [\ell], k \in J} \mspace{-9mu}  \alpha_{|J|} \cdot f(u \mid \pi_J(S)-u)\\
& \geq \sum_{u \in OPT} \sum_{J \subseteq [\ell]} \frac{|J|}{\ell} \cdot \alpha_{|J|} \cdot f(u \mid \pi_J(S)) -\eps' \cdot g'(OPT') \\
& \geq \sum_{J \subseteq [\ell]}\frac{|J|}{\ell} \cdot \alpha_{|J|} \cdot f(OPT \mid \pi_J(S)) -\eps' \cdot g'(OPT') \\ 
& \geq \sum_{J \subseteq [\ell]}\frac{|J|}{\ell} \cdot \alpha_{|J|} \cdot \left[f(OPT) -f(\pi_J(S))\right] -\eps' \cdot g'(OPT') 
\enspace.
\end{align*}}%
where the first inequality holds by Lemma \ref{lem:submodularA}, the 
penultimate inequality holds by Lemma \ref{lem:submodularB}, and the last inequality follows from $f$'s monotonicity.

The final inequality we get is the same as in the proof of Lemma~\ref{lem:main}, except for the error term $-\eps' \cdot g'(OPT')$ and a replacement of $S_J$ with $\pi_J(S)$. Continuing with the proof of Lemma~\ref{lem:main} from this point, we get that this lemma still applies up to the same modifications (when the condition of the current lemma holds), and plugging this observation into the proof of Proposition~\ref{prop:approximation_basic} then yields
\begin{equation} \label{eq:raw_final_approximation}
	\alpha_\ell \cdot (\ell + 1) \cdot f(\pi_{[\ell]}(S)) - \alpha_1 \cdot \ell \cdot f(\varnothing)
  \geq
  \ell \cdot [(1 + 1/\ell)^\ell - 1] \cdot f(OPT) -\eps' \cdot g'(OPT')
  \enspace.
\end{equation}
Recall that $\alpha_\ell = (1 + 1/\ell)^{\ell - 1}$ and $\alpha_1 = 1$. Additionally, since $\pi_J(OPT') \subseteq \pi_{[\ell]}(OPT') \in \cI$ for every set $J \subseteq [\ell]$, the definition of $OPT$ implies that
\begin{align*}
	g'(OPT')
	={} &
	\sum_{J\subseteq [\ell]}\alpha_{|J|} \cdot f(\pi_J(OPT'))
	\leq
	\sum_{J\subseteq [\ell]}\alpha_{|J|} \cdot f(OPT)\\
	={} &
	\sum_{i = 1}^\ell \frac{\ell \cdot (1 + 1/\ell)^{i - 1}}{i} \cdot f(OPT)
	\leq
	e\ell \cdot \sum_{i = 1}^\ell \frac{1}{i} \cdot f(OPT)
	\leq
	e\ell (1 + \ln \ell) \cdot f(OPT)
	\enspace.
\end{align*}
The lemma now follows by plugging all the above observations into Inequality~\eqref{eq:raw_final_approximation}, and rearranging, since
\[
	\frac{\eps' \cdot g'(OPT')}{\alpha_\ell \cdot (\ell + 1)}
	=
	\frac{\eps \cdot g'(OPT') / (e(1 + \ln \ell))}{(1 + 1/\ell)^{\ell - 1} \cdot (\ell + 1)}
	\leq
	\frac{\eps\ell \cdot f(OPT)}{\ell + 1}
	\leq
	\eps \cdot f(OPT)
	\enspace.
	\qedhere
\]
\end{proof}
\begin{corollary} \label{cor:approximation_final}
It always holds that $\bE[f(\pi_{[\ell]}(S))] \geq (1 - (1 + \nicefrac{1}{\ell})^{-\ell})\cdot f(OPT) - O(\eps) \cdot f(OPT)$. In particular, 
for every $\eps > 0$, if $\ell$ is set to $1 + \lceil 1/\eps \rceil$ in Algorithm~\ref{alg:final}, then $\bE[f(\pi_{[\ell]}(S))] \geq\left(1 - \nicefrac{1}{e} - O(\eps)\right) \cdot f(OPT)$.
\end{corollary}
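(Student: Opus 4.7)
The plan is to deduce Corollary~\ref{cor:approximation_final} as a short consequence of Lemma~\ref{lemma:approximation_final} combined with the success probability guarantees of Proposition~\ref{prop:fast}. I would handle the deterministic and randomized instantiations of Line~\ref{line:maximizing_g_prime} in a unified way, treating the deterministic case as a special case of the randomized one (with failure probability zero).

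First I would observe that whenever the subroutine from Proposition~\ref{prop:fast} succeeds, the returned set $S$ satisfies Inequality~\eqref{eq:final_alg_guarantee}, so Lemma~\ref{lemma:approximation_final} applies and yields
\[
    f(\pi_{[\ell]}(S)) \;\geq\; \bigl(1 - (1 + \nicefrac{1}{\ell})^{-\ell}\bigr)\cdot f(OPT) + (1 + \nicefrac{1}{\ell})^{-\ell} \cdot f(\varnothing) - \eps \cdot f(OPT) \enspace.
\]
By non-negativity of $f$, the $f(\varnothing)$ term can be dropped. When the randomized subroutine fails (with probability at most $\eps'$), the algorithm sets $S \gets \varnothing$, and so $f(\pi_{[\ell]}(S)) = f(\varnothing) \geq 0$.

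Next I would take expectations via the law of total expectation, obtaining
\[
    \bE[f(\pi_{[\ell]}(S))] \;\geq\; (1 - \eps') \cdot \Bigl[\bigl(1 - (1 + \nicefrac{1}{\ell})^{-\ell}\bigr)\cdot f(OPT) - \eps \cdot f(OPT)\Bigr] \enspace.
\]
Since the bracketed expression is bounded by $f(OPT)$, the $(1 - \eps')$ prefactor costs at most $\eps' \cdot f(OPT) \leq \eps \cdot f(OPT)$, and absorbing this into the $O(\eps)$ error term yields the first inequality of the corollary.

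For the ``in particular'' clause, I would invoke the analytic bound already used in the proof of Proposition~\ref{prop:approximation_basic}: when $\ell = 1 + \lceil 1/\eps \rceil \geq 2$, we have $(1 + \nicefrac{1}{\ell})^{-\ell} \leq \nicefrac{1}{e} + O(\eps)$. Substituting this into the first inequality gives the desired $(1 - \nicefrac{1}{e} - O(\eps))$-approximation. The only real subtlety, and hence the main (modest) obstacle, is the bookkeeping between $\eps$ and $\eps' = \eps / (e(1 + \ln \ell))$: the choice of $\eps'$ is calibrated so that both the failure probability and the slack in Inequality~\eqref{eq:final_alg_guarantee} are absorbed into a single $O(\eps) \cdot f(OPT)$ additive error.
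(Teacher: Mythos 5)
Your proposal is correct and follows essentially the same route as the paper: apply Lemma~\ref{lemma:approximation_final} on the success event, use non-negativity of $f$ to handle the failure event and to drop the $f(\varnothing)$ term, take expectations, absorb the $(1-\eps')$ prefactor into the $O(\eps)\cdot f(OPT)$ error, and invoke the analytic estimate from the proof of Proposition~\ref{prop:approximation_basic} for the final clause. The only cosmetic difference is that the paper first bounds $1-\eps' \geq 1-\eps$ and then carries a $2\eps$ error, whereas you keep $\eps'$ explicit a bit longer; the content is identical.
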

\begin{proof}
Lemma~\ref{lemma:approximation_final} implies that the output set $\pi_{[\ell]}(S)$ of Algorithm~\ref{alg:final} obeys, with probability at least $1 - \eps' \geq 1 - \eps$, the inequality
\[
	f(\pi_{[\ell]}(S))
	\geq
	\big(1 - (1 + \nicefrac{1}{\ell})^{-\ell}\big)\cdot f(OPT) + (1 + \nicefrac{1}{\ell})^{-\ell} \cdot f(\varnothing) - \eps \cdot f(OPT)
  \enspace.
\]
Since the non-negativity of $f$ guarantees that $f(\pi_{[\ell]}(S)) \geq 0$ and $f(\varnothing) \geq 0$, the above inequality implies
\begin{align*}
	\bE[f(\pi_{[\ell]}(S))]
	\geq{} &
	(1 - \eps) \cdot \big[\big(1 - (1 + \nicefrac{1}{\ell})^{-\ell}\big)\cdot f(OPT) + (1 + \nicefrac{1}{\ell})^{-\ell} \cdot f(\varnothing) - \eps \cdot f(OPT)\big]\\
	\geq{} &
	(1 - \eps) \cdot \big(1 - (1 + \nicefrac{1}{\ell})^{-\ell}\big)\cdot f(OPT) - \eps \cdot f(OPT)\\
	\geq{} &
	\big(1 - (1 + \nicefrac{1}{\ell})^{-\ell}\big)\cdot f(OPT) - 2\eps \cdot f(OPT)
	\enspace.
\end{align*}

The corollary now follows since the calculations done in the first part of the proof of Proposition~\ref{prop:approximation_basic} guarantee that, for $\ell = 1 + \lceil 1/\eps \rceil$, $(1 + 1/\ell)^{-\ell} = \nicefrac{1}{e} + O(\eps)$.
\end{proof}

We next complete the proof of Theorem~\ref{thm:main}.

\begin{proof}[Proof of Theorem ~\ref{thm:main}]

Assume that we choose for $\ell$ the value $1 + \lceil 1/ \eps \rceil$ as in Corollary~\ref{cor:approximation_final}. To prove the theorem, we only need to show that for this choice of value for $\ell$, Algorithm~\ref{alg:final} obeys the query complexities stated in the theorem. Below, we show that the query complexity of the deterministic version of Algorithm~\ref{alg:final} is $\tilde{O}_\eps(nr)$, and the query complexity of the randomized version is $\tilde{O}_\eps(n + r\sqrt{n})$.
    
Note that the above choice of value for $\ell$ implies that both $\ell$ and $1/\eps'$ can be upper bounded by functions of $\eps$. Thus, a value oracle query to $g'$ can be implemented using $2^\ell= O_{\eps}(1)$ value oracle queries to $f$, and the size of the ground set of $\cM'$ is $n\cdot \ell= O_{\eps}(n)$. Recall also that an independence oracle query to $\cM'$ can be implemented using a single independence oracle query to $\cM$, and the rank of $\cM'$ is $r$. These observations imply together that the query complexities of the deterministic and randomized versions of Algorithm~\ref{alg:final} are identical to the query complexities stated in Proposition~\ref{prop:fast} when one ignores terms that depend only on $\eps$. The lemma now follows since the deterministic algorithm of Proposition~\ref{prop:fast} requires $\tilde{O}(\eps^{-1}nr) = \tilde{O}_\eps(nr)$ queries, and the randomized algorithm of Proposition~\ref{prop:fast} requires $\tilde{O}(\eps^{-1}(n+r\sqrt{n})) = \tilde{O}_\eps(n + r\sqrt{n})$ queries.
\end{proof}



\section{Fast Algorithms for Finding a Relaxed Local Optimum}\label{sec:approximate-local}


In this section, we prove Proposition~\ref{prop:fast}, which we repeat here for convenience.
{\let\forcelabel\label \let\label\ignore \renewcommand{\theproposition}{\ref*{prop:fast}} \propFast \addtocounter{theorem}{-1}}

\begin{remark}\label{rem:localOPT}
We note that a set $S$ obeying the guarantee \eqref{ineq-localOPT} in Proposition~\ref{prop:fast} is not a local maximum of $f$ in the usual sense. A true (approximate) local optimum would be an independent set $S \in \cI$ that obeys: (i) $f(S) \geq f(S - u) - (\eps/r) \cdot [f(OPT) - f(S_0)]$ for every $u \in S$, (ii) $f(S ) \geq f(S + v) - (\eps/r) \cdot [f(OPT) - f(S_0)]$ for every $v \not \in S$ for which $S + v$ is independent in the matroid, and (iii) $f(S) \geq f(S - u + v) - (\eps / r) \cdot [f(OPT) - f(S_0)]$ for every $u \in S$ and $v \in \cN \setminus S$ for which $S - u + v$ is independent.\footnote{Notice that when $f$ is monotone, condition (i) is redundant, and condition (ii) is weaker than requiring $S$ to be a base of the matroid. Thus, the output of our idealized Algorithm~\ref{alg:basic} is indeed an approximate local optimum according to this definition.} We claim that such a local optimum always obeys \eqref{ineq-localOPT}. To see that, fix a set $T \in \cI$, and let $S'$ and $T'$ be arbitrary bases of $\cM$ that include $S$ and $T$ as subsets, respectively. Let $h\colon T' \to S'$ be the bijective function whose existence is guaranteed by Corollary~\ref{cor:perfect_matching_two_bases} for the bases $S'$ and $T'$. Then, by the submodularity of $f$,
\begin{align*}
	\sum_{v\in T}f(v \mid S - v)-{}& \sum_{u\in S}f(u \mid S - u)\\
    ={} &
    \sum_{\substack{v \in T \\ h(v) \not \in S}} \mspace{-9mu} f(v \mid S) - \sum_{\substack{u \in S \\ h^{-1}(u) \not \in T}} \mspace{-18mu} f(u \mid S - u) + \sum_{\substack{v \in T \\ h(v) \in S}} \mspace{-9mu} [f(v \mid S - v) - f(h(v) \mid S - h(v))]\\
    \leq{} &
    \sum_{\substack{v \in T \\ h(v) \not \in S}} \mspace{-9mu} f(v \mid S) - \sum_{\substack{u \in S \\ h^{-1}(u) \not \in T}} \mspace{-18mu} f(u \mid S - u) + \sum_{\substack{v \in T \\ h(v) \in S}} \mspace{-9mu} [f(S - h(v) + v) - f(S)]\\
    \leq{} &
    |T'| \cdot (\eps / r) \cdot [f(OPT) - f(S_0)]
    \leq
    \eps \cdot [f(OPT) - f(S_0)]
    \enspace,
\end{align*}
where the penultimate inequality holds since the number of terms in the second sum is at most $|T' \setminus T|$.

This shows that the guarantee in Proposition~\ref{prop:fast} is indeed weaker than a true (approximate) local optimum. However, as shown in Section~\ref{ssc:implementation}, this weaker guarantee suffices for our purposes, and in this section we describe faster ways to obtain it.
\end{remark}

The deterministic and randomized algorithms mentioned by Proposition~\ref{prop:fast} are presented in Sections~\ref{ssc:deterministic} and~\ref{ssc:randomized}, respectively. Both algorithms require the following procedure, which uses the binary search technique suggested by~\cite{chakrabarty2019faster,nguyen2019note}.

\begin{lemma}\label{lem:independence}
Let $S$ be an independent set of the matroid $\cM$, $S'$ be a subset of $S$, and $v$ be an element of the ground set $\cN$ such that $S + v \not \in \cI$, but $S \setminus S' + v \in \cI$. Then, there exists an element $u\in S'$ such that $S+v-u\in\cI$, and furthermore, there is an algorithm that gets as input a weight $w_u$ for every element $u\in S'$, and finds an element in $\arg \min_{u\in S', S+v-u\in\cI}\{w_u\}$ using $O(\log |S'|)$ independence oracle queries. 
\end{lemma}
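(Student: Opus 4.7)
The plan is to reduce the statement to the standard fundamental-circuit structure of matroids and then apply a straightforward bisection. Since $S \in \cI$ and $S+v\notin\cI$, the set $S+v$ contains a unique circuit $C(v,S)$ through $v$, and by the standard matroid exchange property we have $S+v-u\in\cI$ if and only if $u\in C(v,S)\setminus\{v\}$. Hence the feasible candidates in $S'$ are exactly $S' \cap (C(v,S)\setminus\{v\})$, and the hypothesis $S\setminus S' + v\in\cI$ forces this intersection to be nonempty (otherwise the entire circuit $C(v,S)$ would survive inside $S\setminus S' + v$, contradicting independence).

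Next I would sort $S'$ by weight into $u_1,u_2,\ldots,u_k$ with $w_{u_1}\le w_{u_2}\le\cdots\le w_{u_k}$; this does not consume any oracle query. The goal becomes locating the smallest index $i^*$ such that $u_{i^*}\in C(v,S)$. The critical step is to prove the binary-search-friendly equivalence
\[
Q_i \;\triangleq\; (S\setminus\{u_1,\ldots,u_i\})+v \;\in\;\cI
\quad\Longleftrightarrow\quad
\{u_1,\ldots,u_i\}\cap C(v,S)\ne\varnothing.
\]
For the ``$\Leftarrow$'' direction, if the prefix intersects $C(v,S)$ then $C(v,S)\not\subseteq Q_i$; since $S\setminus\{u_1,\ldots,u_i\}$ is independent, any circuit of $Q_i$ must contain $v$, but the unique circuit through $v$ in $S+v$ is $C(v,S)$, which cannot sit inside $Q_i$ — so $Q_i$ is independent. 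For the ``$\Rightarrow$'' direction, if the prefix misses $C(v,S)$ then $C(v,S)\subseteq Q_i$, immediately making $Q_i$ dependent.

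Once the equivalence is established, the indicator $[\,Q_i\in\cI\,]$ is monotone non-decreasing in $i$, taking the value $0$ at $i=0$ (since $Q_0=S+v\notin\cI$ by hypothesis) and the value $1$ at $i=k$ (since $Q_k=S\setminus S'+v\in\cI$ by hypothesis), and it flips exactly at $i^*$. A textbook bisection on $\{0,1,\ldots,k\}$ therefore pins down $i^*$ using $O(\log|S'|)$ independence-oracle queries to $Q_i$, after which we output $u_{i^*}$, which is the minimum-weight element of $S'\cap C(v,S)$ by the sorting.

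The main obstacle is the equivalence above — specifically the ``$\Leftarrow$'' direction, which hinges on the uniqueness of the fundamental circuit through $v$ in $S+v$ and the fact that restricting to a subset of $S$ cannot introduce new circuits through $v$ beyond (a subset of) $C(v,S)$. All other ingredients — sorting, monotonicity of the binary-search predicate, and counting queries — are routine.
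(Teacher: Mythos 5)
Your algorithm is identical to the paper's: sort $S'$ by weight, binary-search on the monotone predicate $Q_i=(S\setminus\{u_1,\ldots,u_i\})+v\in\cI$ (the paper writes this predicate as $(S\setminus S')\cup\{v,i{+}1,\ldots,|S'|\}\in\cI$, which is the same set), and output the element at the flip point. The difference is only in how correctness is justified. You invoke the fundamental-circuit characterization ($S+v-u\in\cI$ iff $u\in C(v,S)\setminus\{v\}$), reducing both the monotonicity of the search predicate and the identification of the answer to the single equivalence $Q_i\in\cI\iff\{u_1,\ldots,u_i\}\cap C(v,S)\ne\varnothing$. The paper argues more directly from the exchange axiom: it extends the independent set $(S\setminus S')\cup\{v,j{+}1,\ldots,|S'|\}$ to a rank-sized independent subset of $S+v$, notes this forced extension must be $S-j+v$, and gets the lower indices failing by containment. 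Both arguments are correct; yours is arguably the more conceptual packaging (one clean iff lemma), while the paper's stays elementary and avoids introducing circuit terminology. Either proof would be acceptable.
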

\begin{proof}
    Let us denote, for simplicity, the elements of $S'$ by the numbers $1$ up to $|S'|$ in a non-decreasing weight order (i.e., $w_1 \leq w_2 \leq \ldots \leq w_{|S'|}$). Since $S + v \not \in \cI$, but $S \setminus S' + v \in \cI$, the down-closedness of $\cI$ implies that there exist an element $j \in [|S'|]$ such that $(S \setminus S') \cup \{v, i, i+1 \ldots, |S'|\}\in \cI$ for all $i>j$, and $(S \setminus S') \cup \{v, i, i+1, \ldots, |S'|\}\not\in \cI$ for $i\leq j$. The element $j$ can be found via binary search using $O(\log |S'|)$ independence oracle queries.

		We claim that the above element $j$ is a valid output for the algorithm promised by the lemma. To see this, note that since $(S \setminus S') \cup \{v, j+1, j+2 \ldots, |S'|\}\in \cI$, the matroid exchange axiom guarantees that it is possible to add elements from $S$ to the last set to get a subset of $S + v$ of size $|S|$ that is independent. Since $(S \setminus S') \cup \{v, j, j+1 \ldots, |S'|\}\not \in \cI$, this independent subset cannot include $j$, and thus, must be $S - j + v$. In contrast, for every $i < j$, $S - i + v \supseteq (S \setminus S') \cup \{v, j, j+1, \ldots, |S'|\}\not\in \cI$, and thus, $S - i + v \not \in \cI$.
	\end{proof}

To simplify the exposition of our algorithms, we implicitly assume in the rest of this section that the ground set $\cN$ includes a set $D$ of $r$ dummy elements. These dummy elements do not affect the objective function, and adding them to an independent set of the matroid keeps the set independent, unless its size exceeds $r$. More formally, for every set $S \subseteq \cN$,
\begin{itemize}
    \item $f(S) = f(S \setminus D)$, and
    \item $S \in \cI$ if and only if $S \setminus D \in \cI$ and $|S| \leq r$.
\end{itemize}
If the ground set $\cN$ does not naturally contain such a set $D$ of dummy elements, then $r$ dummy elements can be artificially added to the ground set before the execution of our algorithms. In this case, the objective function and the matroid should be extended to sets that include the new dummy elements using the above formal properties of the dummy elements. One can verify that these extensions keep the matroid legal, and also preserve the submodularity and non-negativity of the objective function. Furthermore, each value or independence oracle query to the extended objective function or matroid can be evaluated using a single value or independence oracle query to the original objective function or matroid, and therefore, the introduction of the dummy elements does not affect the query complexity of our algorithms. Finally, we note that if dummy elements are artificially added to the ground set, then any such elements that end up in the output set of our algorithms have to be removed from these sets after the algorithms terminate. Fortuanately, however, this removal does not affect Inequality~\eqref{ineq-localOPT}.

The idea of introducing dummy elements to simplify the exposition of algorithms for submodular maximization can be traced back to~\cite{buchbinder2014submodular}. In our algorithms, these elements are useful since they allow us to restrict our attention to bases of the matroid and to swap operations that add one element and remove another. When the function $f$ is monotone, one can make these assumptions even in the absence of dummy elements, and thus, the algorithms we present in this section work for such functions even if dummy elements are not introduced.

\subsection{A Deterministic Algorithm} \label{ssc:deterministic}

In this section, we describe and analyze the deterministic algorithm promised by Proposition~\ref{prop:fast}, which appears as Algorithm~\ref{alg:fastlocal}. Lemma~\ref{lem:deterministic} proves the correctness of this algorithm and the bounds stated in Proposition~\ref{prop:fast} on the the number of queries it makes. 

\begin{algorithm}[ht]
\DontPrintSemicolon
\caption{\textsc{Fast Deterministic Local Search Algorithm}} \label{alg:fastlocal}
    Add to $S_0$ dummy elements to make it a base (without changing its value).\label{line:to_base}\\
    \For{$i = 1$ \KwTo $k \triangleq \lceil \nicefrac{r}{\eps} \rceil$}
    {
        \If{there exist elements $u\in S_{i - 1}$ and $v\in \cN\setminus S_{i - 1}$ such that $S_{i - 1}-u+v\in \cI$ and $f(v \mid S_{i - 1})- f(u \mid S_{i - 1}-u) \geq 0$\label{line:condition_improving}}{
            Let $u_i$ and $v_i$ be such elements $u$ and $v$ maximizing the difference $f(v_i \mid S_{i - 1})- f(u_i \mid S_{i - 1}-u_i)$, and set $\Delta_i$ to be the value of this difference.\\
			Let $S_i \gets S_{i - 1}+v_i-u_i$.\label{line:change_S}
        }
        \Else
        {
            Let $S_i \gets S_{i - 1}$, and set $\Delta_i \gets 0$.\label{line:keep_S}
        }
	}
    Choose $i^* \in \arg \min_{i \in [k]} \Delta_i$.\\
    \Return $S_{i^* - 1}$.
\end{algorithm}

We are now ready to prove that Algorithm~\ref{alg:fastlocal} has the properties guaranteed by Proposition~\ref{prop:fast}.

\begin{lemma} \label{lem:deterministic}
Algorithm~\ref{alg:fastlocal} can be implemented using $O(\eps^{-1}nr)$ value oracle and $O(\eps^{-1}nr \log r)$ independence oracle queries. Moreover, the final subset $S$ in the algorithm satisfies Inequality~\eqref{ineq-localOPT}.
\end{lemma}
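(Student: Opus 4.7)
The proof has two parts: bounding the query complexity, and verifying that the termination condition implies Inequality~\eqref{ineq-localOPT}. My plan is to handle them separately; the first is essentially a counting argument, while the second is a standard base-exchange argument using Corollary~\ref{cor:perfect_matching_two_bases}.

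For the query complexity, I would first bound the number of iterations of the while loop. The initial set $S$ (after line~\ref{line:to_base}) satisfies $f(S) \geq f(S_0) \geq f(OPT)/3$ by monotonicity of $f$ and Lemma~\ref{lem:fastapprox}, and clearly $f(S) \leq f(OPT) \leq 3 f(S_0)$ throughout. Since each iteration strictly increases $f(S)$ by at least $(\eps / r) \cdot f(S_0)$, the number of iterations is at most $2r/\eps = O(r/\eps)$. For the cost of one iteration, I would precompute $f(S)$, all marginals $f(v \mid S)$ for $v \in \cN \setminus S$ using $O(n)$ value queries, and all marginals $f(u \mid S-u)$ for $u \in S$ using $O(r)$ value queries. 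Then, for each $v \in \cN\setminus S$, I invoke Lemma~\ref{lem:independence} (with $S'=S$, which is valid since $S+v\notin \cI$ as $S$ is a base but $\{v\}\in\cI$) using the weights $w_u = f(u\mid S-u)$; this returns in $O(\log r)$ independence queries the element $u^* \in S$ minimizing $f(u^* \mid S - u^*)$ subject to $S - u^* + v \in \cI$. Checking the swap criterion for this best $u^*$ then determines whether any good swap involving $v$ exists. This gives $O(n)$ value queries and $O(n \log r)$ independence queries per iteration, yielding the desired $O(\eps^{-1}nr)$ and $O(\eps^{-1}nr\log r)$ totals, with the setup in lines~\ref{line:find_start}--\ref{line:to_base} contributing only lower-order terms.

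For correctness, fix any $T \in \cI$. Since $f(v\mid S-v)\geq 0$, extending $T$ to a base $T'\supseteq T$ only increases $\sum_{v\in T}f(v\mid S-v)$, so it suffices to prove the inequality for $T'$. Apply Corollary~\ref{cor:perfect_matching_two_bases} to the bases $T'$ and $S$ to obtain a bijection $h\colon T' \to S$ with $S - h(v) + v \in \cI$ for every $v \in T'$ and $h(v)=v$ for $v \in T'\cap S$. Using that $h$ is a bijection,
\[
\sum_{v\in T'} f(v\mid S-v) - \sum_{u\in S} f(u\mid S-u)
= \sum_{v\in T'} \bigl[f(v\mid S-v) - f(h(v)\mid S-h(v))\bigr].
\]
Terms with $v\in T'\cap S$ vanish because $h(v)=v$. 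For $v\in T'\setminus S$, we have $S-v=S$, and since $S-h(v)+v\in\cI$ and $h(v)\in S$, the termination condition of the while loop yields $f(v\mid S)-f(h(v)\mid S-h(v)) < (\eps/r)\cdot f(S_0)$. Summing over $v \in T'\setminus S$ and using $|T'\setminus S|\leq r$ and $f(S_0)\leq f(OPT)$, the right-hand side is bounded by $\eps\cdot f(OPT)$, establishing Inequality~\eqref{ineq-localOPT}.

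There is no serious obstacle in this argument; the only delicate points are being careful that $f(v\mid S-v)=f(v\mid S)$ for $v\notin S$ (so the termination condition applies directly), and choosing to extend $T$ to a base so that Corollary~\ref{cor:perfect_matching_two_bases} is applicable. The subtle engineering step is in the complexity analysis, where the use of Lemma~\ref{lem:independence} inside each iteration is what keeps the independence-oracle cost only a logarithmic factor above the value-oracle cost, rather than a multiplicative $r$ factor.
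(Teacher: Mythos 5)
Your proof is correct and follows essentially the same route as the paper: bound the number of iterations via the minimum marginal improvement against the total available gain, bound the per-iteration cost using Lemma~\ref{lem:independence}, and prove correctness via Corollary~\ref{cor:perfect_matching_two_bases} after extending $T$ to a base. One small slip: you assert ``$\{v\}\in\cI$'' to justify invoking Lemma~\ref{lem:independence} with $S'=S$, but this can fail if $v$ is a loop of $\cM$; the fix (as the paper notes) is to spend one extra independence query per $v$ to check $\{v\}\in\cI$ and simply skip $v$ when it fails, since then no swap involving $v$ is feasible anyway — this does not change the asymptotic query count.
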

\begin{proof}
For every $i \in [k]$, $\Delta_i$ lower bounds the difference $f(S_i) - f(S_{i - 1})$. If $S_i$ is set by Line~\ref{line:keep_S} of Algorithm~\ref{alg:fastlocal}, then this is trivial since $\Delta_i = 0 = f(S_i) - f(S_{i - 1})$ in this case. Otherwise, by the submodularity of $f$, we have
\[
    \Delta_i
    =
    f(v_i \mid S_{i - 1}) - f(u_i \mid S_{i - 1} - u_i)
    \leq
    f(v_i \mid S_{i - 1} - u_i) - f(u_i \mid S_{i - 1} - u_i)
    =
    f(S_i) - f(S_{i - 1})
    \enspace. 
\]
Observe also that $f(S_k) \leq f(OPT)$ because $S_k$ is a feasible solution. Thus,
\[
    \sum_{i = 1}^k \Delta_i
    \leq
    \sum_{i = 1}^k [f(S_i) - f(S_{i - 1})]
    =
    f(S_k) - f(S_0)
    \leq
    f(OPT) - f(S_0)
    \enspace.
\]
Since $\Delta_{i^*}$ is the minimal value in the above sum, the last inequality implies that
\[
    \Delta_{i^*}
    =
    \min_{i \in [k]} \Delta_i
    \leq
    \frac{1}{k} \sum_{i = 1}^k \Delta_i
    \leq
    \frac{f(OPT) - f(S_0)}{k}
    \enspace.
\]

Using the last observation, let us prove that the output set $S_{i^* - 1}$ of Algorithm~\ref{alg:fastlocal} satisfies Inequality~\eqref{ineq-localOPT} with respect to every independent set $T \in \cI$. Note that $S_{i^* - 1}$ is a base of $\cM$ since it has the size of a base. Additionally, we assume below that $T$ is a base. This assumption is without loss of generality since if $S_{i^* - 1}$ violates Inequality~\eqref{ineq-localOPT} with respect to some set $T$, then it violates it also with respect to any base of $\cM$ obtained by adding enough dummy elements to $T$. 
Therefore, we can use Corollary~\ref{cor:perfect_matching_two_bases} to get a bijective function $h\colon T \to S_{i^* - 1}$ such that $S_{i^* - 1} - h(v) + v \in \cI$ for every $v \in T$ and $h(v) = v$ for every $v \in S_{i^* - 1} \cap T$. We claim that for every element $v \in T$, it holds that
\begin{equation} \label{eq:element_approx_local}
	f(v \mid S_{i^* - 1} - v)- f(h(v) \mid S_{i^* - 1}-h(v))  \leq \Delta_{i^*}
    \enspace.
\end{equation}
If $v$ belongs to $S_{i^* - 1} \cap T$, then $h(v) = v$, and thus, $f(v \mid S_{i^* - 1} - v)- f(h(v) \mid S_{i^* - 1}-h(v)) = 0 \leq \Delta_{i^*}$ because $\Delta_i \geq 0$ for every $i \in [k]$. For $v \in T \setminus S_{i^* - 1}$, we prove Inequality~\eqref{eq:element_approx_local} by proving that
\[
    f(v \mid S_{i^* - 1} - v)- f(h(v) \mid S_{i^* - 1}-h(v))
    = f(v \mid S_{i^* - 1})- f(h(v) \mid S_{i^* - 1}-h(v))
    \leq
    \Delta_{i^*}
    \enspace.
\]
If Algorithm~\ref{alg:fastlocal} sets $S_{i^*}$ on Line~\ref{line:change_S}, then this inequality holds since $h(v)$ and $v$ are one possible pair of candidates that Algorithm~\ref{alg:fastlocal} could choose as $u_{i^*}$ and $v_{i^*}$, respectively. Otherwise, if Algorithm~\ref{alg:fastlocal} sets $S_{i^*}$ on Line~\ref{line:keep_S}, then the last inequality still holds since in this case $\Delta_{i^*} = 0 > f(v \mid S_{i^* - 1})- f(h(v) \mid S_{i^* - 1}-h(v))$ because $h(v)$ and $v$ were one possible pair of candidates to be the elements $u$ and $v$ in the condition on Line~\ref{line:condition_improving} of the algorithm.
Summing up Inequality~\eqref{eq:element_approx_local} over all $v\in T$ and using that $k = \lceil \nicefrac{r}{\eps} \rceil$ now gives
\[\sum_{v\in T}f(v \mid S_{i^* - 1} - v)- \sum_{u\in S_{i^* - 1}}f(u \mid S_{i^* - 1}-u) \leq r \Delta_{i^*} \leq \frac{r \cdot [f(OPT) - f(S_0)]}{k} \leq \eps \cdot [f(OPT) - f(S_0)] \enspace.\]

Next, we would like to determine the number of oracle queries used by Algorithm~\ref{alg:fastlocal}. Line~\ref{line:to_base} of
Algorithm~\ref{alg:fastlocal} requires no oracle queries since it can be implemented by simply adding $r - |S_0|$ dummy elements to $S_0$. In the following, we show that each iteration of Algorithm~\ref{alg:fastlocal} can be implemented using $O(n)$ value oracle queries and $O(n \log r)$ independence oracle queries. Since Algorithm~\ref{alg:fastlocal} has only $O(\frac{r}{\eps})$ iterations, this will imply that Algorithm~\ref{alg:fastlocal} requires only $O(\frac{r}{\eps} \cdot n) = O(\eps^{-1}nr)$ value oracle queries and $O(\frac{r}{\eps} \cdot n \log r) = O(\eps^{-1}nr \log r)$ independence oracle queries.


To implement iteration number $i$ of Algorithm~\ref{alg:fastlocal}, we first compute for each element $u\in S_{i - 1}$ the value $f(u \mid S_{i - 1}-u)$, which requires $O(r) = O(n)$ value oracle queries. By defining the weight of every $u \in S_{i-1}$ to be $f(u \mid S_{i - 1} - u)$, we can then use the algorithm from Lemma~\ref{lem:independence} to find for each $v\in \cN \setminus S_{i - 1}$, an element $u_v \in \arg \min_{u\in S_{i - 1}, S_{i-1}+v-u\in\cI}\{f(u \mid S_{i-1}-u)\}$ using $O(\log r)$ independence oracle queries per element $v$, and thus, $O(n \log r)$ independence oracle queries in total.\footnote{Technically, to use Lemma~\ref{lem:independence} we need to verify, for every element $v \in \cN \setminus S_{i-1}$, that $S_{i-1} + v$ is not independent, and that $S_{i-1} \setminus S_{i - 1} + v = \{v\} \in \cI$. The first of these properties is guaranteed to hold since $S_{i-1}$ is a base of $\cM$. The second property can be checked with a single additional independence oracle query per element $v$. If this property is violated, then by the down-closedness property of independence systems, there is no element $u \in S_{i-1}$ such that $S_{i-1} + v - u \in \cI$.} Once this is done, to complete the iteration, it only remains to find the element $v \in \cN \setminus S_{i-1}$ for which the difference $f(v \mid S_{i - 1}) - f(u_v \mid S_{i - 1} - u_v)$ is maximized, which can be done using only $O(n)$ additional value oracles queries by simply trying all possible options for $v$. If this difference turns out to be non-negative, then we know that the condition in Line~\ref{line:condition_improving} of Algorithm~\ref{alg:fastlocal} evaluates to TRUE in iteration $i$, and the elements $u_i$ and $v_i$ can be set to $u_v$ and $v$, respectively. Otherwise, the condition in Line~\ref{line:condition_improving} evaluates to FALSE in this iteration.
\end{proof}

\subsection{A Randomized Algorithm} \label{ssc:randomized}

In this section, we prove the part of Proposition~\ref{prop:fast} related to its randomized algorithm. We do this by proving the following proposition.

\begin{proposition} \label{prop:fast_rand}
Let $f\colon 2^\cN \rightarrow \bR$ be a submodular function, $M=(\cN,\cI)$ be a matroid of rank $r$ over a ground set $\cN$ of size $n$, $S_0$ be an arbitrary given set in $\cI$, and $OPT$ be a set in $\cI$ maximizing $f$. Then, for any $\eps\in (0,1)$, there exists a randomized algorithm that makes $\tilde{O}(\eps^{-1}(n+r\sqrt{n}))$ value and independence oracle queries and outputs a subset $S\in \cI$ and a value $\Delta$ that obey
\begin{equation} \label{eq:delta_guarantee}
    \sum_{v \in T} f(v \mid S - v) - \sum_{u \in S} f(u \mid S - u) \leq \Delta
    \qquad
    \forall T \in \cI
    \enspace.
\end{equation}
Furthermore, with probability at least $1/2$, $\Delta \leq \frac{\eps}{3} \cdot [f(OPT) - f(S_0)]$.
\end{proposition}

The guarantee of Proposition~\ref{prop:fast} for its randomized algorithm follows from Proposition~\ref{prop:fast_rand} by a repetition argument. Specially, to get the guarantee of Proposition~\ref{prop:fast}, one needs to execute the algorithm of Proposition~\ref{prop:fast_rand} $p \triangleq \lceil \log_{2}{\eps^{-1}} \rceil$ times, which results in a list of $p$ pairs $(S_1, \Delta_1), (S_2, \Delta_2), \dotsc, (S_p, \Delta_p)$ (and increases the query complexity just by a logarithmic factor). Let $\Delta_i$ be the lowest $\Delta$ value in this list. The set $S_i$ is the output set described by Proposition~\ref{prop:fast}. Notice that with probability at least $1 - (1 - 1/2)^p \geq 1 - \eps$, we have $\Delta_i \leq \frac{\eps}{3} \cdot [f(OPT) - f(S_0)] \leq \eps \cdot [f(OPT) - f(S_0)]$ (the last inequality holds since the independence of $S_0$ in the matroid implies that $f(OPT) \geq f(S_0)$), and thus, $S_i$ obeys Inequality~\eqref{ineq-localOPT} with probability at least $1 - \eps$. If $f$ is guaranteed to be monotone and $S_0 = \varnothing$, one can add an additional step to the algorithm in which it first applies the algorithm from Lemma~\ref{lem:fastapprox} to the function $f'(S) = f(S) - f(\varnothing)$ (which is non-negative, monotone and submodular) to get an independent set $S'$ obeying $f(S') - f(\varnothing) \geq \frac{1}{3} \cdot [f(OPT) - f(\varnothing)]$, and then indicates failure if $\Delta_i > \eps \cdot [f(S') - f(S_0)]$. This step indicates failure with probability at most
\begin{multline*}
    \Pr[\Delta_i > \eps \cdot [f(S') - f(S_0)]]
    \leq
    \Pr\Big[\Delta_i > \eps \cdot \Big[\frac{1}{3}\cdot f(OPT) + \frac{2}{3} \cdot f(\varnothing) - f(S_0)\Big]\Big]\\
    =
    \Pr\Big[\Delta_i > \frac{\eps}{3} \cdot [f(OPT) - f(S_0)]\Big]
    =
    1 - \Pr\Big[\Delta_i \leq \frac{\eps}{3} \cdot [f(OPT) - f(S_0)]\Big]
    \leq
    \eps
    \enspace.
\end{multline*}
Furthermore, we claim that the output set $S_i$ obeys Inequality~\eqref{ineq-localOPT} whenever this step does not indicate failure. This is the case because the fact that $S'$ is a feasible solution implies that, whenever a failure is not indicated, it must hold that $\Delta_i \leq \eps \cdot [f(S') - f(S_0)] \leq \eps \cdot [f(OPT) - f(S_0)]$.

The algorithm we use to prove Proposition~\ref{prop:fast_rand} appears as Algorithm~\ref{alg:fastlocalrandom}. For simplicity, we assume in this algorithm that $n$ is a perfect square. This can be guaranteed, for example, by adding additional dummy elements to the ground set. Alternatively, one can simply replace every appearance of $\sqrt{n}$ in Algorithm~\ref{alg:fastlocalrandom} with $\lceil \sqrt{n} \rceil$. Some of the ideas in the design and analysis of Algorithm~\ref{alg:fastlocalrandom} can be traced to the work of Tukan et al.~\cite{tukan2024practical}, who designed a fast local search algorithm for maximizing a submodular function subject to a cardinality constraint.


\begin{algorithm}[ht]
\caption{\textsc{Fast Randomized Local Search Algorithm}} \label{alg:fastlocalrandom}
\DontPrintSemicolon
    Add to $S_0$ dummy elements to make it a base (without changing its value).\label{line:to_base_random1}\\
    \For{$i=1$ \KwTo $k\triangleq \lceil \nicefrac{12r}{\eps} \rceil$}{
			Sample a subset $R_1\subseteq S_{i-1}$ of size $\min\{r, \sqrt{n}\}$ uniformly at random.\\
			Sample a subset $R_2\subseteq \cN$ of size $\max\{\frac{n}{r}, \sqrt{n}\}$ uniformly at random.\\
            Let $R'_2 \triangleq \{v \in R_2 \setminus S_{i - 1} \mid S_{i - 1} \setminus R_1 + v \in \cI\}$.\\
		Initialize $S_i\gets S_{i-1}$.\\
            \If{$R'_2 \neq \varnothing$}
            {
             Compute $f(v\mid S_{i-1})$ for every $v\in R'_2$ and $f(u\mid S_{i-1}-u)$ for every $u\in R_1$.\label{line:compute}\\
              \For{{\bf each} $v\in R'_2$\label{line:loop11}
              }{$u_{v}\gets \arg \max_{u\in R_1, S_{i-1}-u+v\in \cI}\{f(v \mid S_{i-1})-f(u\mid S_{i-1}-u)\}$.\label{line:computeuv}
              }
              $v^*\gets \arg\max_{v\in R'_2} \{f(v \mid S_{i-1})-f(u_v\mid S_{i-1}-u_v)\}$. \label{line-v1}\\
             \If{$f(v^* \mid S_{i-1})-f(u_{v^*}\mid S_{i-1}-u_{v^*})\geq 0$}{Update $S_{i} \gets S_{i-1}+v^*-u_{v^*}$. \label{line-v2}}
             }
}
Sample uniformly at random $i\in [k]$.\\
Let $\Delta \gets \max_{T \in \cI} [\sum_{v\in T}f(v \mid S_{i - 1} - v) -\sum_{u\in S_{i - 1}}f(u \mid S_{i - 1}-u)]$.\label{line:find_delta}\\
\Return $S_i$ and $\Delta$.
\end{algorithm}


The next two lemmata show together that Algorithm~\ref{alg:fastlocalrandom} obeys all the properties guaranteed by Proposition~\ref{prop:fast_rand}. The first of these lemmata bounds the number of queries used by Algorithm~\ref{alg:fastlocalrandom}, and the second lemma bounds the probability that $\Delta$ is larger than $\frac{\eps}{3} \cdot f(OPT)$ (notice that $\Delta$ obeys Inequality~\eqref{eq:delta_guarantee} by definition). 

\begin{lemma}
  Algorithm \ref{alg:fastlocalrandom} makes only $O(\eps^{-1}(n+r\sqrt{n}))$ value oracle queries and $O(\eps^{-1}(n+r\sqrt{n})\log r)$ independence oracle queries.
\end{lemma}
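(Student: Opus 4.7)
The plan is to account for the queries in three stages: initialization (Lines~1--\ref{line:to_base_random}), each iteration of the main for-loop, and the final check on Line~\ref{line:fail_check}. Initialization is handled directly by Lemma~\ref{lem:fastapprox}, which costs $O(n \log r)$ queries, plus an additional $O(n)$ independence queries to extend $S_0$ to a base.

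For a single iteration $i$, the key observation is that sampling $R_1$ and $R_2$ is free. Checking whether the condition on Line~\ref{line:replacement_exists} holds can be done with one independence query per element of $R_2$, costing $O(|R_2|)$ queries. To carry out Line~\ref{line:find_exchange}, I would first spend $O(|R_1| + |R_2|)$ value queries to precompute $f(u \mid S_{i-1} - u)$ for every $u \in R_1$ and $f(v \mid S_{i-1})$ for every $v \in R_2 \setminus S_{i-1}$. Then, for each $v \in R_2 \setminus S_{i-1}$, I would apply Lemma~\ref{lem:independence} with $S = S_{i-1}$ and $S' = R_1$ (using the weights $w_u = f(u \mid S_{i-1} - u)$) to locate the element $u_v \in R_1$ minimizing $f(u \mid S_{i-1} - u)$ subject to $S_{i-1} - u + v \in \cI$, using $O(\log |R_1|) = O(\log r)$ independence queries per $v$. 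Finally, the maximizing pair can be extracted from the $|R_2|$ values $f(v \mid S_{i-1}) - f(u_v \mid S_{i-1} - u_v)$ without further queries. The total per-iteration cost is thus $O(|R_1| + |R_2|)$ value queries and $O(|R_2| \log r)$ independence queries.

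Next I would combine the bounds. A case analysis on whether $r \leq \sqrt{n}$ or $r > \sqrt{n}$ shows that in both regimes $|R_1| + |R_2| = O(\sqrt{n} + n/r)$ and $|R_2| = O(\sqrt{n} + n/r)$. Multiplying by $k = O(r/\eps)$ iterations yields $O(\eps^{-1}(r\sqrt{n} + n))$ value queries and $O(\eps^{-1}(r\sqrt{n} + n)\log r)$ independence queries for the loop.

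The only remaining step is the feasibility check on Line~\ref{line:fail_check}, which is the one place the analysis is not completely routine. The expression $\sum_{v \in T} f(v \mid S_{i-1} - v)$ is a linear function of $T$ once $S_{i-1}$ is fixed, since each weight $w_v \triangleq f(v \mid S_{i-1} - v)$ depends only on $v$. Therefore I would compute all weights $w_v$ using $O(n)$ value queries, and then use the standard matroid greedy algorithm (sort by weight and greedily add) to obtain the maximum-weight independent set $T^* \in \cI$ using $O(n\log n)$ time and $O(n)$ independence queries. Comparing $\sum_{v \in T^*} w_v - \sum_{u \in S_{i-1}} f(u \mid S_{i-1} - u)$ to $\eps \cdot f(S_0)$ decides the failure condition. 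Summing the contributions of all three stages gives the claimed $O(n \log r + \eps^{-1}(n + r\sqrt{n}))$ value-oracle bound and $O(\eps^{-1}(n + r\sqrt{n})\log r)$ independence-oracle bound, where the initialization costs are absorbed since $\eps^{-1} \geq 1$.
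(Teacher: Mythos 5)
Your proposal is correct and follows essentially the same decomposition and analysis as the paper: initialization via Lemma~\ref{lem:fastapprox} plus a linear pass to extend to a base; per-iteration cost dominated by $O(|R_1|+|R_2|)$ value queries for the marginals and $O(|R_2|\log r)$ independence queries via the binary search of Lemma~\ref{lem:independence}; and a final check implemented by computing the $n$ weights $f(v \mid S_{i-1}-v)$ and running the greedy matroid algorithm for maximum-weight independent set. The paper's proof makes the same observations, including the $|R_1|+|R_2| = O(n/r + \sqrt{n})$ bound and the absorption of the $O(n\log r)$ initialization cost.
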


\begin{proof}
The first line of Algorithm~\ref{alg:fastlocalrandom} requires no oracle queries since it can be implemented by adding $r - |S_0|$ dummy elements to $S$. 
Line~\ref{line:find_delta} of the algorithm boils down to finding an independent set $T$ maximizing a linear function $g\colon 2^\cN \to \nnR$ defined as $g(T) \triangleq \sum_{v \in T} f(v \mid S_i - v)$ for every $T \subseteq \cN$. As explained in Section~\ref{sec:preliminaries}, such a set can be found using $O(n)$ value and independence queries via a well-known greedy algorithm, and thus, Line~\ref{line:find_delta} can be implemented using this number of queries. Given these observations, to complete the proof of the lemma, it only remains to show that every iteration of the main loop of Algorithm~\ref{alg:fastlocalrandom} can be implemented using $O(n/r + \sqrt{n})$ value oracle queires and $O((n/r + \sqrt{n}) \log r)$ independence oracle queries.

Value oracle queries are necessary only for implementing Line~\ref{line:compute} in this loop, and the algorithm needs $O(|R_1|+|R'_2|) = O(|R_1|+|R_2|)=O(n/r + \sqrt{n})$ such queries per iteration to implement this line. Independence oracle queries are necessary for implementing two things in the main loop of Algorithm~\ref{alg:fastlocalrandom}: the computation of $R'_2$ and the loop starting on Line~\ref{line:loop11}. Computing the set $R'_2$ requires $O(|R_2|) = O(n/r + \sqrt{n})$ independence oracle calls. Every iteration of the loop starting on Line~\ref{line:loop11} can be implemented by invoking Lemma~\ref{lem:independence} with $S'=R_1$ and $w_u=f(u \mid S_{i-1}-u)$, which requires $O(\log r)$ independence oracle queries. Hence, overall this loop requires $O(|R'_2|\log r) = O(|R_2|\log r)= O((n/r+\sqrt{n})\log r)$ independence oracle queries (per iteration of the main loop).
\end{proof}

\begin{lemma}\label{lem:fail}
   With probability at least $1/2$, the value of $\Delta$ returned by Algorithm~\ref{alg:fastlocalrandom} is no more than $\frac{\eps}{3} \cdot [f(OPT) - f(S_0)]$. 
\end{lemma}

\begin{proof}
Recall that Algorithm~\ref{alg:fastlocalrandom} makes $k=\lceil \frac{12r}{\eps} \rceil$ iterations, and let $i\in[k]$ be one of these iterations. Fix all randomness of the algorithm until iteration $i$ (including iteration $i$ itself), 
and let $T$ be any base of $\cM$. Since $S_{i - 1}$ is also a base of $\cM$ (it is an independent set with the size of a base), Corollary~\ref{cor:perfect_matching_two_bases} guarantees the existence of a bijection $h\colon T \to S_{i-1}$ such that $S_{i-1} - h(v) + v \in \cI$ for every $v \in T$ and $h(v) = v$ for every $v \in S_{i-1} \cap T$. Let us now define $C\triangleq \{(h(v),v) \mid v\in T\}$, and let $R_1$, $R_2$, $R'_2$, $v^*$ and $u_{v^*}$ denote their values in iteration number $i$. If $R'_2 \neq \varnothing$, then
{\allowdisplaybreaks\begin{align} \label{eq:deterministic_inequality}
  f(S_{i} \mid S_{i-1})
	& =
	\max\{0, f(S_{i-1}-u_{v^*}+v^*)- f(S_{i-1})]\}\\\nonumber
    & \geq
	\max\{0, f(v^* \mid S_{i-1})- f(u_{v^*} \mid S_{i-1}-u_{v^*})]\}\\
  &=
	\max\{0,\max_{\substack{(u,v)\in R_1\times R_2'\\S_{i-1}-u+v\in \cI}}[f(v \mid S_{i-1})- f(u \mid S_{i-1}-u)]\}\\\nonumber
  & \geq
	\max\{0,\max_{(u,v)\in C \cap (R_1\times (R_2 \setminus S_{i - 1}))}[f(v \mid S_{i-1})- f(u \mid S_{i-1}-u)]\}\\
	& =
	\max\{0,\max_{(u,v)\in C \cap (R_1\times R_2)}[f(v \mid S_{i-1} - v)- f(u \mid S_{i-1}-u)]\}\\\nonumber
  &
	\geq \max\bigg\{0,\sum_{(u,v)\in C \cap (R_1\times R_2)}\frac{f(v \mid S_{i-1} - v)- f(u \mid S_{i-1}-u)}{|C \cap (R_1\times R_2)|}\bigg\}
\end{align}}%
where the first inequality holds since the facts that $u_{v^*} \in S_{i - 1}$ and $v^* \not \in S_{i - 1}$ and the submodularily of $f$ imply together that $f(u_{v^*} \mid S_{i-1}-u_{v^*})\geq f(u_{v^*} \mid S_{i-1}-u_{v^*}+v^*)$. The second equality follows as $(u_{v^*}, v^*)$ is the pair maximizing $\max_{(u,v)\in R_1\times R'_2,S_{i-1}-u+v\in \cI}[f(v \mid S_{i-1})- f(u \mid S_{i-1}-u)]$.
The second inequality holds since $(u, v) \in C \cap (R_1 \times (R_2 \setminus S_{i - 1}))$ implies $S_{i - 1} \setminus R_1 + v \subseteq S_{i - 1} - u + v \in \cI$, and the third equality holds since for every $(u, v) \in C$ such that $v \in S_{i - 1}$ it holds that $v = h(v) = u$. 
If $R'_2 = \varnothing$, then $S_i = S_{i - 1}$ and $C \cap (R_1 \times R_2) = \varnothing$, and therefore, the leftmost and rightmost sides of Inequality~\eqref{eq:deterministic_inequality} are both $0$, which means that the leftmost side still upper bounds the rightmost side in this case.

Unfixing the random choice of the subsets $R_1$ and $R_2$ in iteration $i$, we get

\begin{align}
  \bE_{R_1, R_2}[f(S_{i} \mid S_{i-1})] 
  & \geq \bE\bigg[\max\bigg\{0,\sum_{(u,v)\in C \cap (R_1\times R_2)}\frac{f(v \mid S_{i-1} - v)- f(u \mid S_{i-1}-u)}{|C \cap (R_1\times R_2)|}\bigg\}\bigg] \label{ineq-to-bound}\\\nonumber  & \geq \frac{1-\nicefrac{1}{e}}{r}\bigg[\sum_{v\in T}f(v\mid S_{i-1} - v) - \sum_{u\in S_{i-1}}f(u \mid S_{i-1}-u)\bigg] \enspace.
\end{align}
Let us explain why the last inequality holds.
Let $p_{\geq 1}$ be the probability that $C \cap (R_1\times R_2)\neq \varnothing$. Assume now that we select a pair $(u,v)\in C$ uniformly at random among the pairs in $C \cap (R_1\times R_2)$, and let $p_{u,v}$ be the probability that the pair $(u, v) \in C$ is selected conditioned on $C \cap (R_1\times R_2)$ being non-empty.
Then, for each pair $(u,v)\in C$ the probability it is chosen is exactly the probability that $C \cap (R_1\times R_2)\neq \varnothing$ times the probability it is chosen given this event. We therefore get that,
\begin{multline*}
	\bE\bigg[\sum_{(u,v)\in C \cap (R_1\times R_2)} \mspace{-18mu}\frac{f(v \mid S_{i-1} - v)- f(u \mid S_{i-1}-u)}{|C \cap (R_1\times R_2)|}\bigg]\\
	=
	p_{\geq 1}\cdot \sum_{(u,v)\in C}p_{u,v}\left[f(v\mid S_{i-1} - v)-f(u \mid S_{i-1}-u)\right] \enspace.
\end{multline*}
As the choice of the sets $R_1$ and $R_2$ is done uniformly at random, the probability $p_{u, v}$ must be the same for every $(u,v)\in C$, and thus, it is $1/|C| = 1/r$. Note also that the probability $p_{\geq 1}$ that $C \cap (R_1\times R_2)$ is non-empty is at least $1 - \nicefrac{1}{e}$ because, even conditioned on a fixed choice of the set $R_1\subseteq S_{i-1}$, the probability that none of the elements in $\{v \mid h(v) \in R_1\}$ are chosen to $R_2$ is only

\[\prod_{i=1}^{|R_2|}\left(1-\frac{|R_1|}{n-i+1}\right) \leq \left(1-\frac{|R_1|}{n}\right)^{|R_2|} \leq e^{-\nicefrac{|R_1|\cdot |R_2|}{n}}= \frac{1}{e} \enspace, \]
where the final equality holds since  $|R_1|\cdot |R_2|= \min\{r, \sqrt{n}\}\cdot \max\{\frac{n}{r}, \sqrt{n}\}=n$. This completes the proof of Inequality~\eqref{ineq-to-bound}.



 
Next, suppose that there exists a subset $T\in \cI$ such that
\begin{equation}\label{eq:fail_inequality}\sum_{v\in T}f(v \mid S_{i-1} - v)-\sum_{u\in S_{i-1}}f(u \mid S_{i-1}-u) \geq \frac{\eps}{3} \cdot [f(OPT) - f(S_0)] \enspace. \end{equation}
We may assume that $T$ is a base of $\cM$ because if $T$ obeys Inequality~\eqref{eq:fail_inequality}, then every base obtained by adding enough dummy elements to $T$ also obeys this inequality. Thus, by Inequality~\eqref{ineq-to-bound}, $\bE_{R_1, R_2}[f(S_{i} \mid S_{i-1})]\geq \frac{1-\nicefrac{1}{e}}{r}\cdot \frac{\eps}{3} \cdot [f(OPT) - f(S_0)] \geq \frac{\eps}{6r} \cdot [f(OPT) - f(S_0)]$.

Unfix now the randomness of all the iterations of Algoritm~\ref{alg:fastlocalrandom}, and let us define $\cA_i$ to be the event that in iteration $i$ there exists a subset $T\in \cI$ obeying Inequality~\eqref{eq:fail_inequality}. By the above inequality and the law of total expectation,
\begin{align*}
  \bE[f(S_{i} \mid S_{i-1})]
	={} &
	\Pr[\cA_i] \cdot \bE[f(S_{i} \mid S_{i-1}) \mid \cA_i] + \Pr[\bar{\cA}_i] \cdot \bE[f(S_{i} \mid S_{i-1}) \mid \bar{\cA}_i]\\
	\geq{} &
	\Pr[\cA_i] \cdot \bE[f(S_{i} \mid S_{i-1}) \mid \cA_i]
	\geq
	\Pr[\cA_i] \cdot \frac{\eps}{6r} \cdot [f(OPT) - f(S_0)]
	\enspace,
\end{align*}
where the first inequality uses the fact that the inequality $f(S_i)\geq f(S_{i-1})$ holds deterministically. Summing up the last inequality over all $i \in [k]$ yields
\[
  f(OPT) - f(S_0) \geq \bE[f(S_{k})] - f(S_0) = \sum_{i=1}^{k} \bE[f(S_{i} \mid S_{i-1})] \geq \frac{\eps}{6r} \cdot \sum_{i=1}^{k} \Pr[\cA_{i}] \cdot [f(OPT) - f(S_0)]\enspace,
\]
where the first inequality holds because $S_k$ is an independent set of $\cM$.
Hence, $\sum_{i=1}^{k} \Pr[\cA_{i}] \leq 6r/\eps$. It now remains to observe that, by the definition of $\cA_i$, $\Pr[\cA_i]$ is exactly the probability that the algorithm returns a value $\Delta > \frac{\eps}{3} \cdot [f(OPT) - f(S_0)]$ if it chooses $S_{i-1}$ as its output set. Since the algorithm chooses a uniformly random $i \in [k]$ for this purpose, the probability that it returns such a value $\Delta$ is $k^{-1} \cdot \sum_{i=1}^{k} \Pr[\cA_{i}] \leq \frac{6r}{k \cdot\eps} \leq \frac{1}{2}$.
\end{proof}

\section{Conclusion}

In this work, we designed a deterministic non-oblivious local search algorithm that has an approximation guarantee of $1 - \nicefrac{1}{e} - \eps$ for the problem of maximizing a monotone submodular function subject to a matroid constraint. This algorithm shows that there is essentially no separation between the approximation guarantees that can be obtained by deterministic and randomized algorithms for maximizing monotone submodular functions subject to a general matroid constraint.
Adding randomization, we showed that the query complexity of our algorithm can be improved to $O_\eps(n + r\sqrt{n})$. Following this work, Buchbinder \& Feldman~\cite{buchbinder2024extending} developed a greedy-based \emph{deterministic} algorithm for maximizing general \emph{non-monotone} submodular functions subject to a matroid constraint, which, for the special case of monotone submodular functions, recovers the optimal approximation guarantee of Theorem~\ref{thm:main}. Like our algorithm, the algorithm of~\cite{buchbinder2024extending} also has both deterministic and randomized versions. Its deterministic version has a much larger query complexity compared to our deterministic algorithm, but the randomized version of the algorithm of~\cite{buchbinder2024extending} has a query complexity of $\tilde{O}_\eps(n + r^{3/2})$, which is always at least as good as the query complexity our randomized algorithm.

There are plenty of open questions that remain. The most immediate of these is to find a way to further improve the running time of the randomized (or the deterministic) algorithm to be nearly-linear for all values of $r$ (rather than just for $r = \tilde{O}(n^{2/3})$ as in~\cite{buchbinder2024extending}). A more subtle query complexity question is related to the fact that, since a value oracle query for the function $g'$ defined in Section~\ref{ssc:implementation} requires $O(2^{\nicefrac{1}{\eps}})$ value oracle queries to $f$, the query complexities of our algorithms depend exponentially on $1/\eps$. We remark that the algorithms designed in~\cite{buchbinder2024extending} also have exponential dependency on $1/\eps$.
In other words, our algorithms and the algorithms of~\cite{buchbinder2024extending} are efficient PTASs in the sense that their query complexities are $h(\eps) \cdot O(\Poly(n, r))$ for some function $h$. A natural question is whether it is possible to design a deterministic FPTAS for the problem we consider, i.e., a deterministic algorithm obtaining $(1 - 1/e - \eps)$-approximation using $O(\Poly(n, r, \eps^{-1}))$ oracle queries. By guessing the most valuable element of an optimal solution and setting $\eps$ to be polynomially small, such an FPTAS will imply, in particular, a deterministic algorithm guaranteeing a clean approximation ratio of $1 - 1/e$ using $O(\Poly(n, r))$ oracle queries.


\appendix
\section{Sums of Monotone Submodular and Linear Functions} \label{app:submodular_linear_sums}

\newcommand{\linearFunction}{{b}}

Sviridenko et al.~\cite{sviridenko2017optimal} initiated the study of the following problem. Given a non-negative monotone submodular function $f \colon 2^\cN \to \nnR$, a linear function $\linearFunction \colon 2^\cN \to \bR$ and a matroid $\cM = (\cN, \cI)$, find a set $S \in \cI$ that (approximately) maximizes $f(S) + \linearFunction(S)$.\footnote{It is customary in the literature to denote the linear function by $\ell$ in this context. However, in this paper, we use instead the notation $\linearFunction$ to avoid confusion with the parameter $\ell$ of our algorithms.} They showed that one can efficiently find a set $S \in \cI$ such that
\begin{equation} \label{eq:target_guarantee}
	f(S) + \linearFunction(S)
	\geq
	(1 - 1/e) \cdot f(OPT) + \linearFunction(OPT) - \{\text{small error term}\}
	\enspace,
\end{equation}
and that this is the best that can be done (in some sense). The original motivation of Sviridenko et al.~\cite{sviridenko2017optimal} for studying this problem has been obtaining optimal approximation for maximization of monotone submodular functions of bounded curvature subject to a matroid constraint.\footnote{The curvature $c$ is a number between $[0, 1]$ that measures the distance of a monotone submodular function from linearity. Sviridenko et al.~\cite{sviridenko2017optimal} obtained an approximation ratio of $1 - c/e$ for this problem, which smoothly interpolates between the optimal approximation ratio of $1 - 1/e$ for general monotone submodular functions (the case of $c \leq 1$) and the optimal approximation ratio of $1$ for linear functions (the case of $c = 0$). Sviridenko et al.~\cite{sviridenko2017optimal} also showed that their result is optimal for any value of $c$.} Later this problem found uses also in machine learning applications as it captures soft constraints and regularizers~\cite{harshaw2019submodular,kazemi2021regularized,nikolakaki2021efficient}. Feldman~\cite{feldman2021guess} suggested a simpler algorithm for maximizing sums of monotone submodular functions and linear functions subject to arbitrary solvable polytope constraints, and maximization of such sums involving non-monotone submodular functions was also studied~\cite{bodek2022maximizing,qi2024maximizing,sun2023regularized}.

Recall now that Filmus and Ward~\cite{filmus2014monotone} described a non-oblivious local search algorithm guided by some auxiliary objective function $g$ such that the output set $S$ of this local search is guaranteed to obey $f(S) \geq (1 - 1/e) \cdot f(OPT)$. Furthermore, by applying a local search algorithm to the linear function $\linearFunction$, it is possible to get an output set $S$ such that $\linearFunction(S) = \linearFunction(OPT)$. One of the results of Sviridenko et al.~\cite{sviridenko2017optimal} was a non-oblivious local search based on the guiding function $(1 - 1/e) \cdot g(\cdot) + \linearFunction(\cdot)$, which is a weighted linear combination of the two above guiding functions. Sviridenko et al.~\cite{sviridenko2017optimal} showed that the output set $S$ of their local search algorithm obeys Inequality~\eqref{eq:target_guarantee}, which is natural since Inequality~\eqref{eq:target_guarantee} is a linear combination of the above-mentioned guarantees of the local searches guided by either $g$ or $\linearFunction$.

Reusing the idea of Sviridenko et al.~\cite{sviridenko2017optimal}, we can maximize the sum $f(\cdot) + \linearFunction(\cdot)$ by making three changes to Algorithm~\ref{alg:final}. The first change is a replacement of the function $g'$ in Line~\ref{line:maximizing_g_prime} of Algorithm~\ref{alg:final} with $g'(\cdot) + \alpha_\ell \cdot (\ell + 1) \cdot \linearFunction'(\cdot)$, where $\linearFunction'$ is the natural extension of $\linearFunction$ to the ground set $\cN'$ defined, for every set $S \subseteq \cN'$, by $\linearFunction'(S) \triangleq \sum_{(u, j) \in S} \linearFunction(\{u\})$. Notice that $g'(\cdot) + \alpha_\ell \cdot (\ell + 1) \cdot \linearFunction'(\cdot)$ is a (not necessarily monotone or non-negative) submodular function, and therefore, Algorithm~\ref{alg:final} can invoke Proposition~\ref{prop:fast} on this function. The second change we need to make in Algorithm~\ref{alg:final} is that the set $S_0$ passed to the algorithm of Proposition~\ref{prop:fast} should be a set maximizing the linear function $\linearFunction'$ among all independent sets of $\cM'$ (instead of the empty set). As explained in Section~\ref{sec:preliminaries}, such a set can be found via a well-known
greedy algorithm using $O(n)$ oracle queries. The final change we need to make in Algorithm~\ref{alg:final} is that the algorithm should output the better solution among its original output set $\pi_{[\ell]}(S)$ and $\pi_{[\ell]}(S_0)$.

The modified Algorithm~\ref{alg:final} resulting from the above changes obeys the properties stated in the following theorem (for appropriate choice of the parameter $\ell$). To avoid repeating lengthy calculations and arguments, we do not give a full proof of this theorem. Instead, we only explain below how such a proof would differ from the analysis of the original Algorithm~\ref{alg:final}.

\begin{theorem} \label{thm:sum}
There exists a deterministic non-oblivious local search algorithm that given a non-negative monotone submodular function $f \colon 2^\cN \to \nnR$, a linear function $\linearFunction \colon 2^\cN \to \bR$, a matroid $\cM = (\cN, \cI)$ and a value $\eps > 0$, finds a set $S \in \cI$ such that $\bE[f(S) + \linearFunction(S)] \geq (1 - \nicefrac{1}{e}) \cdot f(OPT) + \linearFunction(OPT) - O(\eps) \cdot M$ for every set $OPT \in \cI$, where $M \triangleq \max_{T \in \cI} f(T)$. This algorithm has a query complexity of $\tilde{O}_\eps(nr)$, where $r$ is the rank of the matroid and $n$ is the size of its ground set. The query complexity can be improved to $\tilde{O}_\eps(n + r\sqrt{n})$ using randomization.
\end{theorem}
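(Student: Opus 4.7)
The plan is to adapt the trick of Sviridenko et al.~\cite{sviridenko2017optimal} to the setting of Algorithm~\ref{alg:final}: augment the non-oblivious guiding function $g'$ with a lifted copy of $\ell$, and then trace the resulting extra linear terms through the analysis of Lemma~\ref{lemma:approximation_final} and Corollary~\ref{cor:approximation_final}. Concretely, I would lift $\ell$ to the enlarged ground set $\cN' = \cN \times [\ell]$ by setting $\ell'(S) \triangleq \sum_{(u,j) \in S} \ell(\{u\})$; this is a linear function on $\cN'$ that agrees with $\ell \circ \pi_{[\ell]}$ on any $S \in \cI'$ (since $\cI'$ forbids two parallel copies of the same element). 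I would then modify Line~\ref{line:maximizing_g_prime} of Algorithm~\ref{alg:final} to invoke Proposition~\ref{prop:fast} on $\tilde g' \triangleq g' + \alpha_\ell(\ell + 1) \cdot \ell'$ in place of $g'$. Since $\ell'$ is linear, $\tilde g'$ remains submodular; assuming WLOG that $\ell$ is non-negative on singletons (elements of strictly negative singleton weight can be removed by preprocessing, as they would never profitably appear in an optimal solution), $\tilde g'$ is also monotone and non-negative, so Proposition~\ref{prop:fast} applies unchanged.

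The heart of the argument is to rerun the proof of Lemma~\ref{lemma:approximation_final} with $\tilde g'$ replacing $g'$. The key property is that, because $\ell'$ is linear, its marginals at any $v = (u,j)$ satisfy $\ell'(v \mid S - v) = \ell'(v \mid S) = \ell(\{u\})$, so the averaging-over-$j \in [\ell]$ step which attaches a factor of $|J|/\ell$ to the submodular contribution leaves the linear contribution unchanged. Plugging $T = OPT \times \{j\}$ into the analog of Inequality~\eqref{eq:final_alg_guarantee} and summing over $j \in [\ell]$, the linear part collapses to exactly $\alpha_\ell(\ell + 1) \cdot \ell(OPT)$ on the left-hand side and $\alpha_\ell(\ell + 1) \cdot \ell(\pi_{[\ell]}(S))$ on the right-hand side of the first display in the proof of Lemma~\ref{lemma:approximation_final}. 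Carrying these two extra terms unchanged through the telescoping manipulation yields an analog of Inequality~\eqref{eq:raw_final_approximation} of the form
\[
\alpha_\ell(\ell + 1)\bigl[f(\pi_{[\ell]}(S)) + \ell(\pi_{[\ell]}(S))\bigr] - \alpha_1 \ell \cdot f(\varnothing) \geq \ell\bigl[(1 + 1/\ell)^\ell - 1\bigr] f(OPT) + \alpha_\ell(\ell + 1)\ell(OPT) - \eps' \cdot \tilde g'(OPT').
\]
Dividing by $\alpha_\ell(\ell + 1)$, setting $\ell = 1 + \lceil 1/\eps \rceil$ as in Corollary~\ref{cor:approximation_final}, and bounding $\tilde g'(OPT') \leq O(\ell \log \ell) \cdot f(OPT) + O(\ell) \cdot \ell(OPT)$ (with $\eps'$ chosen sufficiently small in terms of $\eps$) then delivers the claimed bound $f(\pi_{[\ell]}(S)) + \ell(\pi_{[\ell]}(S)) \geq (1 - 1/e - \eps)\cdot f(OPT) + \ell(OPT)$.

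The query-complexity claims transfer verbatim from Proposition~\ref{prop:fast}: a value query to $\tilde g'$ still costs only $O_\eps(1)$ queries to $f$ (and one evaluation of $\ell$), and the matroid $\cM'$ is unchanged, so the deterministic $\tilde O_\eps(nr)$ and randomized $\tilde O_\eps(n + r\sqrt{n})$ bounds are inherited. The main technical subtlety I anticipate is handling the case where $\ell$ takes negative values: strictly speaking this breaks the non-negativity/monotonicity hypothesis of Proposition~\ref{prop:fast}, and resolving it cleanly requires either a brief preprocessing argument (dropping elements of negative singleton weight, which is safe by linearity of $\ell$) or a minor extension of Proposition~\ref{prop:fast} allowing an additive linear term in the objective. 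Everything else is a mechanical re-execution of the existing analysis with the linear contribution tracked alongside the submodular one.
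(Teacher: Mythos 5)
Your approach matches the paper's own proof in Appendix~\ref{app:submodular_linear_sums}: both lift $\ell$ to $\ell'(S) = \sum_{(u,j)\in S}\ell(\{u\})$ on $\cN'$, replace the guiding function $g'$ with $g'(\cdot) + \alpha_\ell(\ell+1)\cdot\ell'(\cdot)$ in Line~\ref{line:maximizing_g_prime} of Algorithm~\ref{alg:final}, and exploit the fact that the marginals of a linear function are set-independent so the $|J|/\ell$ averaging leaves the linear contribution unscaled. Your displayed analog of Inequality~\eqref{eq:raw_final_approximation} and the bound on $\tilde g'(OPT')$ also match the structure the paper intends. So the core of your argument is the same as the paper's.

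One part of your writeup is wrong, though it is offered as a side remark rather than a load-bearing step. You suggest that negativity of $\ell$ can be dispatched by dropping every element with $\ell(\{u\}) < 0$, on the grounds that such elements ``would never profitably appear in an optimal solution.'' That justification is false: the element may have a large marginal under $f$, so it can easily be profitable overall (e.g.\ $f(T) = |T|$ and $\ell(\{u\}) = -\nicefrac{1}{2}$ gives a net marginal of $+\nicefrac{1}{2}$). More fundamentally, the theorem is quantified over every $OPT \in \cI$, not just a maximizer, so removing elements from the ground set changes the quantifier and no longer gives the stated guarantee for sets $OPT$ that use a dropped element. The honest issue here is that when $\ell$ takes negative values, the function $g' + \alpha_\ell(\ell+1)\ell'$ handed to Proposition~\ref{prop:fast} need not be non-negative or monotone, and the paper's proof sketch (which simply says to ``carry these extra terms through'') does not address this explicitly either. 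Your alternative suggestion --- extending Proposition~\ref{prop:fast} to accommodate an additive linear term whose marginals may be negative --- is the right way to close this, and is more faithful to what the underlying local-search analysis actually uses (it depends on marginals, not on raw non-negativity), but you leave it undeveloped.
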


The modified Algorithm~\ref{alg:final} applies Proposition~\ref{prop:fast} to the function $g'(\cdot) + \alpha_\ell \cdot (\ell + 1) \cdot \linearFunction'(\cdot)$, which results in a set $S \in \cI'$ that with probability at $1 - \eps'$ obeys
\begin{multline*}
    \sum_{v \in T} [g'(v \mid S - v) + \alpha_\ell \cdot (\ell + 1) \cdot \linearFunction'(v \mid S - v)]
    -
    \sum_{u \in S} [g'(u \mid S - u) + \alpha_\ell \cdot (\ell + 1) \cdot \linearFunction'(u \mid S - u)]\\
    \leq
    \eps' \cdot [g'(A) +  \alpha_\ell \cdot (\ell + 1) \cdot \linearFunction'(A) - g'(S_0) - \alpha_\ell \cdot (\ell + 1) \cdot \linearFunction'(S_0)] \qquad \forall T \in \cI'
    \enspace,
\end{multline*}
where $A$ is a set in $\cI'$ maximizing $g'(\cdot) + \alpha_\ell \cdot (\ell + 1) \cdot \linearFunction'(\cdot)$. Since $g'(S_0)$ is non-negative and $\linearFunction'(S_0) \geq \linearFunction'(A)$ by the definition of $S_0$, the right hand side of the last inequality can be replaced by $\eps' \cdot g'(A)$. The inequality obtained after this replacement can be further simplified using the facts that $\linearFunction'$ is a linear function and $\linearFunction'(T) = \linearFunction(\pi_{[\ell]}(T))$ for every set $T \in \cI'$, which yields
\begin{multline} \label{eq:probabilitic_inequality}
    \sum_{v \in T} g'(v \mid S - v) + \alpha_\ell \cdot (\ell + 1) \cdot \linearFunction(\pi_{[\ell]}(T))
    -
    \sum_{u \in S} g'(u \mid S - u) + \alpha_\ell \cdot (\ell + 1) \cdot \linearFunction(\pi_{[\ell]}(S))\\
    \leq
    \eps' \cdot g'(A) \qquad \forall T \in \cI'
    \enspace.
\end{multline}

We now prove the following variant of Lemma~\ref{lem:main}.

\begin{lemma} \label{lem:main-sum}
Let $\alpha_{\ell + 1} \triangleq 0$. If Inequality~\ref{eq:probabilitic_inequality} holds, then
\begin{multline*}
\sum_{J \subseteq [\ell]} 
\Big[\alpha_{|J|} \cdot |J|\cdot \Big(1+\frac{1}{\ell}\Big) -\alpha_{|J|+1} \cdot (\ell-|J|)\Big] \cdot f(\pi_J(S)) + \alpha_\ell \cdot (\ell + 1) \cdot [\linearFunction(\pi_{[\ell]}(S)) - \linearFunction(OPT)] \\ \geq \sum_{i=1}^{\ell}\binom{\ell-1 }{i-1} \cdot \alpha_{i} \cdot f(OPT) - \eps' \cdot g'(A) \enspace.
\end{multline*}
\end{lemma}
\begin{proof}
Consider the proof of Lemma~\ref{lemma:approximation_final}. Using Inequality~\eqref{eq:probabilitic_inequality}, instead of Inequality~\eqref{eq:final_alg_guarantee} in the derivation of the first display math in this proof yields
\begin{multline*}
    \sum_{u \in OPT} \sum_{J \subseteq [\ell], j \in J} \alpha_{|J|} \cdot f(u \mid \pi_J(S)) + \alpha_\ell \cdot (\ell + 1) \cdot \linearFunction(OPT)
	\leq\\
	\sum_{(u, k) \in S} \sum_{J \subseteq [\ell], k \in J} \mspace{-9mu}\alpha_{|J|} \cdot f(u \mid \pi_J(S) - u) + \alpha_\ell \cdot (\ell + 1) \cdot \linearFunction(\pi_{[\ell]}(S)) + \eps' \cdot g'(A)
    \enspace.
\end{multline*}

Notice that this inequality differs from the inequality given by the first display math in the proof of Lemma~\ref{lemma:approximation_final} only by the addition of the $\linearFunction$ involving terms and the replacement of $OPT'$ with $A$ in the error term $\eps \cdot g'(A)$. Dragging these changes through the following steps in the proof of Lemma~\ref{lemma:approximation_final}, we get the following inequality (instead of the third display math in the proof of Lemma~\ref{lemma:approximation_final}).
\begin{multline*}
    \sum_{J \subseteq [\ell]} 
    \left[\alpha_{|J|} \cdot |J| -\alpha_{|J|+1} \cdot (\ell-|J|)\right] \cdot f(\pi_J(S)) + \alpha_\ell \cdot (\ell + 1) \cdot [\linearFunction(\pi_{[\ell]}(S)) - \linearFunction(OPT)] \\
 \geq \sum_{J \subseteq [\ell]}\frac{|J|}{\ell} \cdot \alpha_{|J|} \cdot \left[f(OPT) -f(\pi_J(S))\right] - \eps' \cdot g'(A)
 \enspace.
\end{multline*}
The lemma now follows by rearranging the last inequality since, as was already argued by the proof of Lemma~\ref{lem:main},
\[\sum_{J \subseteq [\ell]}\frac{|J|}{\ell} \cdot \alpha_{|J|} \cdot f(OPT) = \sum_{i=1}^{\ell}\binom{\ell }{i} \cdot \frac{i}{\ell} \cdot \alpha_{i} \cdot f(OPT) = \sum_{i=1}^{\ell}\binom{\ell-1 }{i-1} \cdot \alpha_{i} \cdot f(OPT)
\enspace.
\qedhere
\]
\end{proof}

\begin{corollary} \label{cor:sum}
If Inequality~\eqref{eq:probabilitic_inequality} holds, then
\[
    f(\pi_{[\ell]}(S)) + \linearFunction(\pi_{[\ell]}(S))
     \geq [1 - (1 + 1/\ell)^{-\ell}] \cdot f(OPT) + (1 + 1/\ell)^{-\ell} \cdot f(\varnothing) + \linearFunction(OPT) - \eps M
     \enspace.
\]
\end{corollary}
\begin{proof}
Using the properties of the coefficients $\alpha_0, \alpha_1, \dotsc, \alpha_\ell$ described in the proof of Proposition~\ref{prop:approximation_basic}, the inequality guaranteed by Lemma~\ref{lem:main-sum} simplifies to
\begin{multline} \label{eq:bound_with_A}
    (1 + 1/\ell)^{\ell - 1} \cdot (\ell + 1) \cdot f(\pi_{[\ell]}(S)) - \ell \cdot f(\varnothing) + (1 + 1/\ell)^{\ell - 1} \cdot (\ell + 1) \cdot [\linearFunction(\pi_{[\ell]}(S)) - \linearFunction(OPT)] \\
 \geq \ell \cdot [(1 + 1/\ell)^\ell - 1] \cdot f(OPT) - \eps' \cdot g'(A)
 \enspace.
\end{multline}
We now recall the bound $\frac{\eps' \cdot g'(OPT')}{\alpha_\ell \cdot (\ell + 1)} \leq \eps \cdot f(OPT)$ from the proof of Lemma~\ref{lemma:approximation_final}. This bound was based on the fact that for every set $J \subseteq [\ell]$ it held in the setting of Section~\ref{sec:algorithm} that $\pi_J(OPT') \subseteq \pi_{[\ell]}(OPT') \in \cI$, and therefore, $f(\pi_J(OPT')) \leq f(OPT)$ by the definition of $OPT$ used in that section. Similarly, in the current section, we have $\pi_J(A) \subseteq \pi_{[\ell]}(A) \in \cI$, and therefore, $f(\pi_J(A)) \leq M$ by the definition of $M$. Thus, repeating the arguments from the proof of Lemma~\ref{lemma:approximation_final}, one can get the analogous bound
\[
	\frac{\eps' \cdot g'(A)}{\alpha_\ell \cdot (\ell + 1)}
	\leq
	\eps M
    \enspace.
\]
Plugging this inequality into Inequality~\eqref{eq:bound_with_A}, and rearranging yields the corollary.
\end{proof}

We are now ready to prove Theorem~\ref{thm:sum}.

\begin{proof}[Proof of Theorem~\ref{thm:sum}.]
We show below that when $\ell$ is set to be $1 + \lceil 1 / \eps \rceil$, the modified Algorithm~\ref{alg:final} has all the properties guaranteed by Theorem~\ref{thm:sum}. First, one can verify that the modifications of Algorithm~\ref{alg:final} do not affect its asymptotic query complexities, and therefore, the algorithm still has the query complexities stated in Theorem~\ref{thm:main}, which are identical to the query complexities guaranteed by Theorem~\ref{thm:sum}. Thus, in the following, we focus on lower bounding the value of the output set $\hat{S}$ of the modified Algorithm~\ref{alg:final}.

Let $\cE$ be the event that the set $S$ produced by the modified Algorithm~\ref{alg:final} obeys Inequality~\eqref{eq:probabilitic_inequality}. By the discussion before the introduction of Inequality~\eqref{eq:probabilitic_inequality}, it holds that $\Pr[\cE] \geq 1 - \eps' \geq 1 - \eps$. Thus, by the law of total expectation,
\begin{align*}
    \bE[f(\hat{S}) + \linearFunction(\hat{S})]
    ={} &
    \Pr[\cE] \cdot \bE[f(\hat{S}) + \linearFunction(\hat{S}) \mid \cE] + \Pr[\bar{\cE}] \cdot \bE[f(\hat{S}) + \linearFunction(\hat{S}) \mid \bar{\cE}]\\
    \geq{} &
    \Pr[\cE] \cdot \bE[f(\pi_{[\ell]}(S)) + \linearFunction(\pi_{[\ell]}(S)) \mid \cE] + (1 - \Pr[\cE]) \cdot [f(S_0) + \linearFunction(S_0)]\\
    \geq{} &
    \Pr[\cE] \cdot \{[1 - (1 + 1/\ell)^{-\ell}] \cdot f(OPT) + (1 + 1/\ell)^{-\ell} \cdot f(\varnothing) + \linearFunction(OPT) - \eps M\} \\&\mspace{100mu}+ (1 - \Pr[\cE]) \cdot [f(S_0) + \linearFunction(S_0)]\\
    \geq{} &
    \Pr[\cE] \cdot [1 - (1 + 1/\ell)^{-\ell}] \cdot f(OPT) + \linearFunction(OPT) - \eps M\\
    \geq{} &
    (1 - \eps) \cdot [1 - (1 + 1/\ell)^{-\ell}] \cdot f(OPT) + \linearFunction(OPT) - \eps M
    \enspace,
\end{align*}
where the first inequality follows from the fact that $\hat{S}$ is the better solution among $\pi_{[\ell]}(S)$ and $\pi_{[\ell]}(S_0)$ (notice that $S_0$ is a deterministic set), the second inequality follows from Corollary~\ref{cor:sum}, and the penultimate inequality holds since $f$ is non-negative and $\linearFunction(S_0) \geq \linearFunction(OPT)$ by the definition of $S_0$.

We now recall that the proof of Proposition~\ref{prop:approximation_basic} shows that when $\ell$ is set to $1 + \lceil 1 / \eps \rceil$ it holds that $(1 + 1/\ell)^{-\ell} = 1/e + O(\eps)$. Plugging this observation into the previous inequality yields
\begin{align*}
    \bE[f(\hat{S}) + \linearFunction(\hat{S})]
    \geq{} &
    (1 - 1/e - O(\eps)) \cdot f(OPT) + \linearFunction(OPT) - \eps M\\
    \geq{} &
    (1 - 1/e) \cdot f(OPT) + \linearFunction(OPT) - O(\eps) \cdot M
    \enspace,
\end{align*}
where the second inequality holds since $f(OPT) \leq M$ by the definition of $M$. The above inequality matches the guarantee of Theorem~\ref{thm:sum} for the value of the output set $\hat{S}$, and thus, completes the proof of this theorem.
\end{proof}

\paragraph*{Acknowledgments.}

We would like to thank Wenxin Li for suggesting some of the modifications in the current version of Inequality~\eqref{ineq-localOPT1} and the application of our technique to the problem of maximizing sums of monotone submodular functions and linear functions. We would like to thank anonymous reviewers for improving the readability of the current version, and for pointing out some correctness issues in our original proof of Theorem~\ref{thm:sum}.  
The work of Niv Buchbinder was supported in part by Israel Science Foundation (ISF) grant no. 3001/24 and United States - Israel Binational Science Foundation (BSF) grant no. 2022418. The work of Moran Feldman was supported in part by Israel Science Foundation (ISF) grant no. 459/20.

\bibliographystyle{plain}
\bibliography{submodular}

\end{document}